\documentclass[11pt]{amsart}
\usepackage{amsmath,amsthm,amssymb,graphicx,enumerate,natbib,color}
\newtheorem{theorem}{Theorem}[section]
\newtheorem{lemma}[theorem]{Lemma}

\newtheorem{assumption}[theorem]{Assumption}
\theoremstyle{remark}
\newtheorem{remark}[theorem]{Remark}
\newtheorem{example}[theorem]{Example}
\numberwithin{equation}{section}
\setlength{\textheight}{8.2in}
\setlength{\oddsidemargin}{0.2in}
\setlength{\evensidemargin}{0.2in}
\setlength{\marginparwidth}{0.8in}
\setlength{\leftmargin}{-0.0in}
\setlength{\textwidth}{6.1in}

\begin{document}

\author[S. Ghosh]{Souvik Ghosh}
\address{Department of Statistics\\
Columbia University \\
New York, NY 10027.}
\email{ghosh@stat.columbia.edu}

\author[S. Ghosh]{Soumyadip Ghosh}
\address{Business Analytics and Math Sciences\\ 
IBM TJ Watson Research Center\\ 
Yorktown Heights, NY 10598.}
\email{ghoshs@us.ibm.com}

\date{\today}
\title[Long Latencies in a Cloud Server]{A Strong Law for the Rate of Growth of Long Latency Periods in Cloud Computing Service}  


\begin{abstract} 
Cloud-computing shares a common pool of resources across customers at
a scale that is orders of magnitude larger than traditional multi-user
systems. Constituent physical compute servers
are allocated multiple ``virtual
machines'' (VM) to serve simultaneously. Each VM user should ideally be
unaffected by others' demand. Naturally, this environment produces new
challenges for the service providers in meeting customer expectations while
extracting an efficient utilization from server resources. We study a new cloud service metric that measures prolonged latency or delay
suffered by customers. We model the workload process of a cloud server
and analyze the process as the customer population grows. The capacity
required to ensure that average workload does not exceed a threshold
over long segments is characterized. This can be used by cloud
operators to provide service guarantees on avoiding long durations of
latency. As part of the analysis, we provide a uniform large-deviation
principle for collections of random variables that is of independent
interest.
 \end{abstract}

\maketitle

{\bf Key words:}  large deviations, long strange segments, latency periods, moving average, non-stationary processes\\

{\bf AMS (2000) Subject Classification:} Primary 60F10, Secondary 60F15, 60G99

\begin{section}{Introduction} \label{sec:intro}
Cloud computing is a paradigm shift of multiple orders of magnitude in the pursuit of extracting greater utilization of server resources
while serving the computing needs of a large collection of customers. This has been made possible primarily by the concept of workload {\em virtualization} wherein individual users operate on {\em virtual machines} (VMs), each with modest resource requirements, and multiple VMs 
are served by a single large computing server. Cloud service providers achieve greater utilization by over-provisioning VMs on compute nodes, acting on the assumption that rarely will multiple customers simultaneously require large quantities of resources. 

The resources required over time by a user is a stochastic process, modeled here as a discrete-time moving-average (MA) process. We allow for a heterogeneous population of customers, where they are partitioned only by their statistical/stochastic behaviour but are considered equal in terms of priority of service. Service guarantees currently provided by cloud computing providers (Amazon Web Services' EC2 , Google's  Web Toolkit, Microsoft's Azure etc.) are weak: Service Level Agreements (SLAs) are available only for quick initial provisioning of a new VM from a user onto a compute node, but no guarantees are provided on the quality of service experienced by the customer over time. Large organizations with significant computing requirements, who are willing to pay for good service guarantees, are thus wary of using this architecture for any activity beyond their non-critical desktop usage; see~\cite{li:intelwp:09,mendler:yankee:2010}.
This in particular impedes large-scale adoption of cloud computing for time-critical and resource-intensive workloads. 

New techniques need to be developed to address the challenge of estimating performance from the user's perspective in this computing paradigm. 
A key performance indicator in multi-user systems measures the {\em latency} suffered by users. Latency occurs when access to computing resources is throttled because the total quantity of one or more resource required (CPU cycles, Memory space, IO bandwidth etc.) by all the VMs exceed the server's capacity. Then, under the most commonly used form of processor sharing discipline, all customers on the server are provisioned proportionately lower resources than they had requested and thus are said to experience latency. Suppose the server is allocated a capacity that maintains a steady per-customer average $C_p$ above its expected value. Even if $C_p$ is a large number, there will be time segments during which the average workload of the server will exceed the total capacity.  Applications that are intolerant to latency are discouraged from being put on clouds in the absence of Service Level Agreements that penalize their incidence (\cite{li:intelwp:09}). Therefore, for a company that wishes to guarantee its customers availability of the server's resources, it is important to understand how large and frequent such long time segments of continued latency can be. We provide a framework to construct such estimates. In particular, we use this framework to estimate the time till the first observation of continued latencies of a given large time length, and its dual, the largest period of latency experienced within a given time. Cloud service operators can utilize this technique to create SLA contracts. In addition, the relationship between the expected first observation time and the per-customer average capacity  can help design system improvements to minimize SLA violations. An operator may also provide differentiated service to customers, where those willing to pay for better guarantees can be put on an isolated sub-cloud with capacity provisioning tailored to their growth, usage and the agreed upon SLA contract.

Our framework is built on analyzing {\em long strange segments} (see definitions \eqref{eq:str.seg.gen} and \eqref{eq:lss:tr}) of the underlying workload process of the cloud server; refer \cite{Arratia:1990p6736} and \cite{ghosh:samorodnitsky:2010} for a review. A standard technique for analyzing the rate of growth of long strange segments for stationary processes involves an associated large deviation principle (see discussion at the end of Section~\ref{sec:mainresult}).  While standard probabilistic models (for example, queues) operate on stationary processes, the cloud 
workload process is non-stationary (see definition in Section~\ref{sec:mainresult}). 
This is because the total number of virtual machines in the cloud environment increases over time. This is a consequence of the fact that VMs are software artifacts that are inexpensive to instantiate and operate, and so client organizations tend to encourage large-scale adoption and persistent usage of the VMs within their organization. In addition, a major new technological innovation allows fast migration of VMs between individual physical servers within the same cloud infrastructure. Thus, the cloud service environment is better modeled to consist of larger logical servers that each continually grow in capacity in order to serve a continually growing population of users,  which yields a non-stationary workload process.

The standard large deviation tools that are vital to the analysis of long strange segments of stationary processes are thus not useful for our non-stationary workload process. This process  
however has a certain structure that can be gainfully exploited.
To take advantage of this, we develop a tool for proving uniform large deviation principle that in its most general form applies to collections of random variables that satisfy certain regulatory conditions (see Theorem~\ref{thm:gen:uldp} in Section~\ref{sec:uldp}).  This tool, which is of independent interest, plays a   crucial role in  proving 
Theorem~\ref{thm:lss}, the main result of this paper, which provides a strong law characterization of the rate of growth of duration of latency periods as a function of the Fenchel-Legendre transform of the log moment generation functions of the underlying process. The conditions imposed by the uniform large deviation principle (Theorem~\ref{thm:uldp}) admit many common models for computer workloads. 

To summarize, the main contributions of this paper are:
\begin{enumerate}[a)]
\item We provide a tool for proving uniform large deviation principle for a collection of sequences of probability measures.  Recall that the Gartner-Eliis Theorem is a very helpful device for proving large deviation principle for a single sequence of probability measures; refer \cite{gartner:1977}, \cite{ellis:1984} and \citep[Theorem 2.3.6, p.44]{dembo:zeitouni:1998}.  We view Theorem~\ref{thm:gen:uldp} as an analogue of the Gartner-Ellis Theorem for proving uniform large deviation principle for a collection of such sequences. The conditions imposed on the random variables restrict the set of admissible probability laws, but are sufficiently flexible to apply to a wide variety of situations.

\item We provide strong laws characterizing the rate of growth of two performance measures of service under the cloud computing architecture, namely the minimum time taken to observe a continued latency period of a given length, and its dual the maximum latency period that is observed within a given time. 

\item We show, using a motivating example, how these results can be used by a cloud service manager to a) create SLA contracts representing a guarantee to the customer against chances of observing frequent long latencies, and b) design system improvements to minimize the frequency of long latencies, such as rates at which new capacity should be procured/allocated to maintain or improve service.
\end{enumerate}

The following section describes our model of the cloud environment and states the main result of this paper. We conclude the section with a discussion of a representative example. Section~\ref{sec:uldp} states and proves the uniform large deviation principle for collections of random variables. This is used in Section~\ref{sec:lss} where the main result is proved.

\end{section}

\begin{section}{Cloud Model and Main Result}\label{sec:mainresult}

We model the workload of each user with respect to the instantaneous requirements for a single resource, e.g. CPU cycles required, over time. A total of $ K$ customer groups are served, where groups differ in their workload characterization. The cloud is managed in a manner that provisions $ n_{ i}(t)$ customers from the $i$th group at time $ t$ on each large logical server. The function $n_i(t)$ is assumed to be a power function, i.e. there exist a positive constant $ \alpha $ and positive integers $ c_{ 1},\ldots,c_{ K}$, such that 
        \[ 
 	n_{ i}(t)= c_{ i}\lfloor t^{ \alpha }\rfloor \ \ \ \ \mbox{ for all } i=1,\ldots,K.   
\]
For any $ x\in \mathbb{R}$, $ \lfloor x \rfloor$ denotes the greatest integer less than or equal to $ x$ and $ \lceil x\rceil$ represents the smallest integer greater than or equal to $ x$.  The $ c_{ i}$ are chosen to be positive integers rather than real numbers. This is solely because of convenience in handling the limit identities which appear below; we are certain that taking $ n_{ i}(t)=\lfloor c_{ i}t^{ \alpha } \rfloor$ for some positive real number $ c_{ i}$ would not have any significant effect on the results. This form for $n_i(t)$ has two important implications: first, the relative mix of customers from each group, defined by the ratios of the parameters $c_i$, remains constant over time, and only the total population of users grows with time. Second, the number of customers remain a deterministic function of time. We believe this setting can be easily generalized to allow the number of customers to be a stochastic process, e.g. the case where $ (n_{ 1}(t),\ldots,n_{ K}(t))$ are jointly regularly varying with index $ \alpha $ and the number of customers in the $ i$th group  is a Poisson process with intensity $ n_{ i}(t)$, but we do not foresee this situation adding any extra insights to the studied problem.

The $ j$th customer in the $ i$th group has workload $ W_{ i,j}(t)$ at discrete-time $ t$:
        \[ 
 	W_{ i,j}(t)=\mu_{ i}+ X_{ i,j}(t)=\mu_{ i}+\beta_{ i}^{ T} Z(t)+\varepsilon_{ i,j}(t)  \ \ \ \ \mbox{ for all }1\le i\le K, 1\le j\le n_{ i}(t),t\ge 1,
\]
where $ \mu_{i}$ is a constant denoting the expected workload of customers in the $ i$th group and $ X_{ i,j}(t)$ is the deviation from the mean workload of the $ j$th customer in the $ i$th group at time $ t$. The stochastic process $X_{i,j}(t)$ is further defined as the weighed sum of a $K$-dimensional moving-average process $Z(t)$ and an additional pure-noise i.i.d. random variables $\varepsilon_{i,j}(t)$. The weights $\beta_i\in \mathbb{R}^{ K}$ are group-specific constants. 
The noise-process $ (\varepsilon_{ i,j}(t);1\le i\le K, t\ge 1, 1\le j\le n_{ i}(t))$ consists of  independent and identically distributed (i.i.d.) random variables, independent of $(\xi(t),t\in \mathbb{Z}) $, with mean zero, satisfying
\[ 
 	\Lambda _{ \varepsilon }(\lambda ):= \log E \big[  \exp \big\{\lambda \varepsilon _{ i,j}(t)\big\} \big]< \infty \ \ \ \ \mbox{ in a neighborhood of }0.   
\]

The process $Z(t)$ is a $ K$ dimensional moving average process defined as
\[ 
 	Z(t)= \sum_{ k}\phi_{ k}\xi(t-k) \ \ \ \ \mbox{ for all }t\in \mathbb{Z},  
\]
with $ \sum_{ k}|\phi_{ k}|< \infty$. We will assume $ \phi:= \sum_{ k}\phi_{ k}\ne 0$. The {\em innovations} $ (\xi(t);t\in \mathbb{Z})$ are $ K$-dimensional i.i.d. 
random variables with mean zero, satisfying
\begin{equation}\label{eq:def:Lambdaxi} 
 	\Lambda_{ \xi}(\eta ):= \log E\big[ \exp\big\{   \eta \cdot \xi(t)  \big\} \big]< \infty     \ \ \ \ \mbox{ for all } \eta \in \mathbb{R}^{ K},
\end{equation} 
where for any two vectors $ x$ and $ y$, $ x\cdot y$ denotes the scalar product.
We shall place the following additional restriction on the log-m.g.f. $\Lambda_{\xi}(\cdot)$ to satisfy the conditions of the uniform large deviation principle (Theorem~\ref{thm:uldp}):
\begin{assumption}\label{assmp:steep} 
 $  \big| \frac{ d}{d\lambda  }  \Lambda _{ \xi}(\lambda \bar \beta ) \big|\to \infty$ whenever $  | \lambda  |\to \infty$, where $ \bar \beta :=C^{ -1}(\sum_{ i=1}^{ K} c_{ i}   \beta_{ i})$ with $ C:=\sum_{ i=1}^{ K} c_{ i}$.
\end{assumption}

This mild restriction on the parameters of the MA process is satisfied by realistic computing workloads. For example, it admits a Gaussian form for the innovations $\xi$.

The expected workload of the server at time $ t$ is given by $\sum_{ i=1}^{ K}  n_{ i}(t)\mu_{ i}.$
In our setup the number of customers in each group grows over time and so does the expected workload of the server. Hence, to keep the system solvent and avoid build up of an infinite queue, the capacity of the server must also be continually increased. This can be done, for example, by ensuring that the capacity grows in order to maintain a constant ratio of $C_{p}$ with the total expected workload.
Our imperative is to understand the deviations from the mean workload. Define $ S(t)$ as the sum of all the deviations until time $ t$:
 \[ S(t):=\sum_{ k=1}^{ t }\sum_{ i=1}^{ K}\sum_{ j=1}^{ n_{ i}(k)} X_{ i,j}(t) \ \ \ \ \mbox{ for all }t\ge 1.\]
 and  $ N(t)$ as the associated normalizing term for time $ t$:
 \[ 
 	N(t)=\sum_{ k=1}^{ t }\sum_{ i=1}^{ K}   n_{ i}(k) \ \ \ \ \mbox{ for all }t\ge 1.
\]
By convention, we understand that $ \sum_{ l=i}^{ j}x_{ l}=0$ if $ j\le i$. Furthermore, if $ i$ and $ j$ are not integers $ \sum_{ l=i}^{ j}x_{ l}$ will denote $ \sum_{ l=\lceil i\rceil}^{ \lfloor j\rfloor}x_{ l}$. 

We study the average deviation of the workload of the server from its mean over long segments of time. For any time segment $ (k,l)$ the average deviation is given by 
\[ 
 	\bar X(k,l):=\frac{ S(l)-S(k)}{N(l)-N(k)}   .
\]
A  simple argument using law of large numbers tell us that $ \bar X(k,l)$ should not be too far away from $ 0$ if $ l-k$ is large. If $ \bar X(k,l)$ is not close to $ 0$ then we term $ (k,l)$ as a {\em strange segment}. It is also easy to see that if we fix any number $ L$ and a threshold $ \epsilon $ and wait sufficiently long, we will almost surely get a segment $ (k,l)$ such that $ l-k\ge L$ and $ \bar X(k,l)>\epsilon $. Our main result describes how the length of these strange segments grow over time. 

For any measurable set $ A$, we define the \emph{long strange segments} as
\begin{equation} \label{eq:str.seg.gen}
R_t(A):=\sup\left\{ m:\, \bar X(l-m,l)\in A \ \text{for some $l=m,\ldots ,t$}\right\},
\end{equation}
and its dual characteristic
\begin{equation}\label{eq:lss:tr}
	T_{ r}(A):= \inf\left\{l:\ \text{there exists}\  k,0\le k\le l-r \mbox{ such that }\bar X(k,l)\in A \right\}.
\end{equation}
 The functional $R_n(A)$ is the maximum length of a segment from the first $n$ observations whose  average is in set  $A$. $T_n(A)$ is the minimum number of observations required to have a segment of length at least $n$, whose average is in the set $A$. It is easy to see that $ R_{ t}(A)$ grows as $ t\to \infty$ and $ T_{ r}(A)$ grows as $ r\to \infty$. Theorem \ref{thm:lss} below describes the rate of growth of these functionals. There is a duality relation between  the rate of growth of these functionals which follows from the fact $ \{   T_{ r}(A)\le m  \}=\{   R_{ m}(A)\ge r  \}$. If the per-customer capacity of the server is maintained at $ C_{ p}$ units above its expected value then we will take $ A=(C_{ p},\infty)$. 

For any convex function $ f  (\cdot)$, we will use $ f ^{ *}(\cdot)$ to denote its Fenchel-Legendre transform:
\[ 
 	f  ^{ *}(x):=\sup_{ \lambda \in \mathbb{R}}\big\{ \lambda x-f (\lambda ) \big\}    .
\]
For any set $ A\subset \mathbb{R}$, $ A^{ \circ}$ and $ \bar A$ will represent the interior and closure of $ A$ respectively. 

\begin{theorem}\label{thm:lss} 
 For any measurable set $ A$
 \begin{equation}\label{eq:Trresult}
 	 I_{ *}\le \liminf_{ r \rightarrow \infty  } \frac{ \log T_{ r}(A)}{ r }\le     \limsup_{ r \rightarrow \infty  } \frac{\log T_{ r}(A)}{ r }\le I^{ *} \ \ \ \ a.s.  , 
\end{equation}
and
 \begin{equation}\label{eq:Rtresult} 
 	 \frac{ 1}{I^{ *} }\le \liminf_{ t \rightarrow \infty  } \frac{ R_{ t}(A)}{ \log t }\le     \limsup_{ t \rightarrow \infty  } \frac{ R_{ t}(A)}{ \log t }\le \frac{ 1}{I_{ *} } \ \ \ \ a.s.,
\end{equation}
where 
\[ 
 	I_{ *} = \inf_{ x\in \bar A}\Lambda ^{ *}(x) \ \ \mbox{ and } \ \ I^{ *}=\inf_{ x\in A^{ \circ}} \Lambda ^{ *}(x),   
\]
$ \Lambda ^{ *}(x)$  is the Fenchel-Legendre transform of $ \Lambda(\lambda ):=\Lambda _{ \xi}(\lambda \phi\bar \beta ) $. 
\end{theorem}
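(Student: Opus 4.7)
The plan is to prove both \eqref{eq:Trresult} and \eqref{eq:Rtresult} simultaneously by exploiting the duality $\{T_r(A) \le m\} = \{R_m(A) \ge r\}$, which converts any almost-sure bound on the growth rate of $R_t(A)$ in $\log t$ into an equivalent bound on the growth rate of $\log T_r(A)$ in $r$. The central probabilistic input is the uniform large deviation principle of Theorem~\ref{thm:uldp}: for every closed $F \subset \mathbb{R}$, open $G \subset \mathbb{R}$, and $\delta>0$ there exists $r_0$ such that, uniformly in $l \ge r \ge r_0$,
\[
P\bigl(\bar X(l-r,l) \in F\bigr) \le e^{-r(\inf_F \Lambda^* - \delta)}, \qquad P\bigl(\bar X(l-r,l) \in G\bigr) \ge e^{-r(\inf_G \Lambda^* + \delta)}.
\]
The rate function $\Lambda^*$ arises because, once the $\varepsilon_{i,j}$-noise contribution (which involves $\gg r$ i.i.d.\ summands and concentrates at a rate far faster than $r$) is absorbed, $\bar X(l-r,l)$ behaves asymptotically like the $k^\alpha$-weighted average of $\phi\bar\beta^{T}\xi(k)$, whose normalized log-moment generating function converges to $\Lambda(\lambda)=\Lambda_\xi(\lambda\phi\bar\beta)$.

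For the upper bound $\limsup_{t\to\infty} R_t(A)/\log t \le 1/I_*$, fix $\epsilon>0$ and set $r_t = \lceil (1/I_* + \epsilon)\log t \rceil$. A union bound over the at most $t$ right endpoints gives
\[
P\bigl(R_t(A) \ge r_t\bigr) \le \sum_{l=r_t}^{t} P\bigl(\bar X(l-r_t,l) \in \bar A\bigr) \le t\, e^{-r_t(I_* - \delta)}.
\]
For $\delta$ small enough that $(1/I_* + \epsilon)(I_* - \delta) > 1$, the right-hand side is summable along the subsequence $t_n = 2^n$. The first Borel--Cantelli lemma then gives $R_{t_n}(A) \le r_{t_n}$ eventually a.s., and the monotonicity of $t \mapsto R_t(A)$ together with $\log t_{n+1}/\log t_n \to 1$ extends this to all $t$; sending $\epsilon \downarrow 0$ finishes the upper half.

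For the lower bound $\liminf_{t\to\infty} R_t(A)/\log t \ge 1/I^*$, fix $\epsilon>0$ and set $r_t = \lfloor (1/I^* - \epsilon)\log t \rfloor$. Partition $\{1,\ldots,t\}$ into $\lfloor t/r_t\rfloor$ consecutive blocks of length $r_t$ (separated by constant-width buffers) and consider the success events $B_k = \{\bar X \text{ over block } k \in A^\circ\}$. The uniform lower bound yields $P(B_k) \ge e^{-r_t(I^* + \delta)}$. Truncating $\phi$ at a sufficiently large finite lag $M$ (the $\ell^1$-tail of $\phi_k$ beyond $M$ contributes negligibly to the effective rate) and taking buffers of width exceeding $M$ makes the truncated $B_k$ strictly independent, yielding
\[
P\bigl(R_t(A) < r_t\bigr) \le \bigl(1 - e^{-r_t(I^* + \delta)}\bigr)^{\lfloor t/r_t\rfloor} \le \exp\bigl(-e^{-r_t(I^* + \delta)}\,\lfloor t/r_t\rfloor\bigr).
\]
For $\delta$ small the inner exponent grows polynomially in $t$, so the estimate is summable along $t_n = 2^n$; Borel--Cantelli and monotonicity again yield $\liminf R_t(A)/\log t \ge 1/I^* - \epsilon$ a.s. The duality then transfers both one-sided bounds into the stated inequalities for $T_r(A)$.

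The principal obstacle is the non-stationarity: because $n_i(k) \propto k^\alpha$, the distribution of $\bar X(l-r,l)$ depends nontrivially on the position $l$, so the classical stationary theory of long strange segments (which proceeds by applying a single Gartner--Ellis LDP) does not apply, and nothing less than \emph{uniform} large deviation control over $l$ will suffice. Securing this uniformity is precisely what Theorem~\ref{thm:uldp} delivers; once it is in hand, the two Borel--Cantelli arguments above are essentially routine. The remaining technicalities are verifying (a) that the $\varepsilon$-noise contribution is $o(r)$ on the exponential scale, using finiteness of $\Lambda_\varepsilon$ near $0$ and $N(l)-N(l-r) \gg r$, and (b) that the truncation error in the MA coefficients $\phi_k$ passes cleanly through the uniform LDP so that the effective rate function is exactly $\Lambda^*$ with $\Lambda(\lambda) = \Lambda_\xi(\lambda\phi\bar\beta)$, with Assumption~\ref{assmp:steep} providing the steepness required for the Gartner--Ellis-type identification.
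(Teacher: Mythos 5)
There is a genuine gap in the upper-bound half of your argument (the bound $\limsup R_t(A)/\log t \le 1/I_*$, equivalently $\liminf \log T_r(A)/r \ge I_*$). You state the ULDP of Theorem~\ref{thm:uldp} as giving the \emph{position-independent} rate $\Lambda^*$, i.e.\ $P(\bar X(l-r,l)\in F) \le e^{-r(\inf_F\Lambda^* - \delta)}$ uniformly in $l$. That is not what Theorem~\ref{thm:uldp} says: it gives the bound with the $k$-dependent rate functions $\Lambda^{k*}$ (where $k \sim (l-r)/r$ indexes the position of the segment). Since $\Lambda^k$ decreases pointwise to $\Lambda$ as $k\to\infty$ (Lemma~\ref{lem:property:Lambda}, essentially Jensen applied to the rescaled weights), one has $\Lambda^{k*} \le \Lambda^*$ with \emph{equality only in the limit}; so $\inf_{\bar A}\Lambda^{k*}$ can be strictly smaller than $I_*$ for segments near the origin, and the claimed uniform exponent $I_* - \delta$ is simply false. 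The paper resolves this by splitting the union bound into positions with small left endpoint (at most $K_0 l$ of them, handled by the uniform positive lower bound of Lemma~\ref{lem:max:ratefn}) and positions with left endpoint $\ge K_0 l$ (handled by Lemma~\ref{lem:ratefn:largek}, which says $\inf_{\bar A}\Lambda^{k*} \ge I_* - \epsilon$ once $k\ge K_0$). Without this split your estimate $P(R_t(A)\ge r_t) \le t\,e^{-r_t(I_*-\delta)}$ does not follow, and the Borel--Cantelli step has nothing to stand on.

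A secondary issue in the same half: your union bound runs only over segments of length \emph{exactly} $r_t$, but the event $\{R_t(A)\ge r_t\}$ allows any length $\ge r_t$, and for a general measurable $A$ a long segment with average in $A$ does not force a length-$r_t$ subsegment with average in $A$. The paper correctly sums over all lengths $l\ge r$. (This is absorbed into a geometric series, so it would be easy to fix once the rate issue is repaired, but as written the bound is incomplete.)

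The other half (lower bound on $R_t$, upper on $T_r$) is sound in outline: there $\Lambda^{k*}\le\Lambda^*$ points in the favorable direction, so using $e^{-r(\inf_G\Lambda^* + \delta)}$ as a uniform lower bound on block-success probabilities is legitimate. Your independence device (truncating $\phi$ and inserting constant-width buffers) differs from the paper's, which instead replaces $\bar X$ over block $k$ by the functional $Y'_{k,t}$ built only from the innovations $\xi(j)$ with $j$ inside the block — giving exact independence with no buffers — and controls the residual $Y''_{k,t}$ (MA tails plus the $\varepsilon$-noise) by an exponential Chebyshev bound. Both routes work; the paper's has the advantage that the residual error estimate falls out of arguments already done inside the proof of Theorem~\ref{thm:uldp}, whereas you would still need to argue that the truncated blocks themselves satisfy a ULDP with the same rate, which is an extra (if routine) verification.
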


\begin{remark}
Under our assumption that the customer group compositions remain constant, the customer groups are all jointly represented by their average $\bar \beta = (\sum_{ i=1}^{ K} c_{ i}   \beta_{ i}) / \sum_{ i=1}^{ K} c_{ i}$.
\end{remark}

\begin{remark}\label{rem:fixed capacity} 
We are interested in sets of the nature $ A=(C_{ p},\infty)$, where system-stability requires that the value $C_p$ be set greater than 0. Then, the continuity and increasing nature of the Fenchel-Legendre transform over $A$ ensures that the infimum over the sets $\bar A$ and $A^{\circ}$ are achieved at $C_{p}$. Thus, the upper and lower bounds in \eqref{eq:Trresult} and \eqref{eq:Rtresult}  collapse to give a limit result of the form: 
\begin{equation}\label{eq:limresult}
  \lim_{ r \rightarrow \infty  } \frac{ \log T_{ r}(A)}{ r } = \lim_{ t \rightarrow \infty  } \frac{ \log t }{ R_{ t}(A)} = \Lambda^{*} (C_p) \ \ \ \ a.s.   
\end{equation}
\end{remark}

\begin{example}
Suppose that the innovation vectors $\xi(t)$ are i.i.d. replicates of a $K-$dimensional joint-normal random vector with mean zero and covariance matrix $\Sigma$. In that case  $ \Lambda (\lambda )= \lambda ^{ 2}\phi^{ 2} \bar \beta ^{ T} \Sigma \bar \beta/2$ and hence 
\[ 
 	\Lambda ^{ *}(x) = \big(2 \phi^{ 2} \bar \beta ^{ T} \Sigma \bar \beta     \big)^{ -1} x^{ 2} \ \ \ \ \mbox{ for all } x\in \mathbb{R}.
\]
Therefore, if $ A=(C_{ p},\infty)$ then 
 \begin{equation}\label{eq:limresult:ex}
  \lim_{ r \rightarrow \infty  } \frac{ \log T_{ r}(A)}{ r } = \lim_{ t \rightarrow \infty  } \frac{ \log t }{ R_{ t}(A)}=  \big(2 \phi^{ 2} \bar \beta ^{ T} \Sigma \bar \beta     \big)^{ -1} C_{ p}^{ 2} \ \ \ \ a.s.   
\end{equation}
This yields the estimates $T_r\sim \exp\{rC^2_p/M\}$ and $R_t\sim M\log t/C^2_p$, where $C_p$ represents the server's capacity and $M=2 \phi^{ 2} \bar \beta ^{ T} \Sigma \bar \beta   $ is a property of the customer classes. As expected, higher values of $ C_{ p}$ slow the rate of growth of the duration $T_r$ before observing a latency period of length $r$. On the other hand, higher variability of the innovation $ \xi(t)$ or a higher value of $|\phi|$ in the MA process results in a higher value of  $M$ and culminates in a faster growth of the long latency periods $R_t$ observed in time $t$. 

Another interesting application is when $A=(-\infty,-C_p)$. This can be used to check if there are long time periods when the server resources are being severely under utilized. By the symmetry of the Gaussian distribution, the estimates for $T_r$ and $R_t$ remain the same in this case. In particular, if $C_p$ were chosen equal to the average workload size, then $R_t$ estimates the longest period by time $t$ when the server idles.
\end{example}

We postpone the proof of Theorem \ref{thm:lss} till Section \ref{sec:lss}, and develop the proper tools required for the proof in Section \ref{sec:uldp}. We close this section with a discussion on why standard large-deviation tools are inadequate for the proof of Theorem~\ref{thm:lss}. 

The rate of growth of long strange segments have been studied by \cite{mansfield:rachev:samorodnitsky:2001} for moving average processes with heavy-tailed innovations and then by \cite{rachev:samorodnitsky:2001} for a long-range dependent moving average processes with heavy-tailed innovations. Recently \cite{ghosh:samorodnitsky:2010} studied the effect of memory on the rate of growth of long strange segments for a moving average process with light-tailed innovations. A strong law of the form \eqref{eq:Rtresult} is often referred to as the Erd\"os-R\'enyi law of large numbers; \cite{ErdRen:1970kx} proved asymptotics for longest head runs in i.i.d. coin tosses.

It is instructive to take a heuristic look at the standard technique of proving the rate of growth of long strange segments for a stationary process, say $ (Y_{ t})$.  A vital tool for analyzing this growth is a  large deviation principle associated with the partial sums of $(Y_t)$. Recall that a sequence of probability measures $(P_t,t\ge 1)$ satisfies \emph{large deviation principle} (LDP) on $\mathbb{R}$ if there exists a non-negative lower-semicontinuous function $ I(\cdot)$ such that for any measurable $ A\subset \mathbb{R}$
\begin{equation}\label{eq:LDP}
-\inf\limits_{x\in A^\circ}I(x) \leq \liminf_{t\rightarrow\infty}\frac{1}{t}\log P_t(A) \leq \limsup_{t\rightarrow\infty}\frac{1}{t}\log
P_t(A) \leq -\inf\limits_{x\in \bar{A}}I(x),
\end{equation}
The function $ I(\cdot)$ is called the \emph{rate function}. A rate function with compact level sets is called a \emph{good} rate function. 

Denote the average of the segment $ (k,l)$ by
\[ 
 	\bar Y(k,l)=\frac{ \sum_{ i=k+1}^{ l} Y_{ i}}{ l-k }   .
\]
It is often possible to show that the law of $ \bar Y(0,t)$ satisfies an LDP under assumptions of mixing or other specific structure on $(Y_{ t})$  and existence of exponential moments of $ Y_{ t}$; see for example \cite{bryc:dembo:1996}, \cite{varadhan:1984}, \cite{dembo:zeitouni:1998}, \cite{deuschel:stroock:1989}. Then for a `nice' set $ A$ such that $ E(Y_{ 0})\notin \bar A$  there exists  $ I> 0$ such that for $ t$ large 
\[ 
 	  \log P\big[ \bar Y(0,t) \in A \big] \sim -I t.
\]
Using stationarity, this implies $  \log P\big[ \bar Y(l,l+t) \in A \big] \sim -I t $ for every $ l\ge 0$. Heuristically, this means that for approximately $ e^{ tI}$ segments of length $ t$, we can expect to find one with an average would be in $ A$. The segments $ (0,t),(1,t+1),(2,t+2),\ldots$ are not independent but that is handled typically using mixing type conditions borrowed from the process $ (Y_{ t})$ itself. Theorem 2.3 in \cite{ghosh:samorodnitsky:2010} is an  example of this line of argument where the authors consider moving average processes and use the large deviation principle for partial sums proved in \cite{ghosh:samorodnitsky:2009} to obtain asymptotic results for the rate of growth of  long strange segments.

In our application's setting, the distribution of $ \bar X(l,l+t)$ differs from that of $ \bar X(0,t)$  when  $ l>0$. This is because the growing number of customers in the system implies that each $ \bar X(l,l+t)$ represents an average over different number of realizations ($N(t+l)-N(l)$ versus $N(t)$). So, in order to understand the rate of growth of the long strange segments we need to estimate the probability $ P\big[ \bar X(l,l+t) \in A \big] $ uniformly over $ l\ge 0$. We address this problem by proving the uniform large deviation principle in Theorem~\ref{thm:uldp}. A collection of probability measures $(P_{k,t},t\ge 1, k\in \Gamma )$ satisfies large deviation principle on $\mathbb{R}$ {\em uniformly over} $ k\in \Gamma $ if there exist  non-negative lower-semicontinuous functions $ (I_{ k}(\cdot),k\in \Gamma )$ such that for any measurable $ A\subset \mathbb{R}$
\begin{equation}\label{eq:uldp:defn:lower}
	\liminf_{ t \rightarrow \infty  } \inf_{ k \in \Gamma } \left\{\frac{ 1}{ t}\log P_{ k,t}(A) +\inf_{ x\in  A^{ \circ}}I_{ k}(x)\right\}\ge 0	
\end{equation}
and
\begin{equation}\label{eq:uldp:defn:upper}
	\limsup_{ t \rightarrow \infty  } \sup_{ k\in \Gamma } \left\{\frac{ 1}{ t}\log P_{ k,t}(A) +\inf_{ x\in \bar A}I_{ k}(x)\right\}\le 0.
\end{equation}
Note that bounds \eqref{eq:uldp:defn:lower} and \eqref{eq:uldp:defn:upper} are generalizations of the LHS and RHS of the standard large-deviation bounds in \eqref{eq:LDP}.

\end{section}

\begin{section}{Uniform Large Deviation Principle} \label{sec:uldp}
The Gartner-Ellis Theorem is an important tool for proving large deviation principle, cf. \cite{gartner:1977}, \cite{ellis:1984} and \citep[Theorem 2.3.6, p.44]{dembo:zeitouni:1998}. Theorem \ref{thm:gen:uldp} is an analog of the Gartner-Ellis Theorem for proving uniform large deviation principle. We use this theorem to prove uniform large deviation principle for the average of segments of the server workload process in Theorem \ref{thm:uldp} which is in fact the first step in proving of Theorem \ref{thm:lss}.
\begin{theorem}\label{thm:gen:uldp} 
 Suppose $ (Y_{ k,t},t\ge 1, k\in \Gamma )$ is a collection of random variables such that there exists $ (\Lambda ^{ k}(\cdot),k\in \Gamma )$ which  are differentiable  and satisfy the following conditions: for all $ 0<L<\infty$ and $ \epsilon >0$ there exists $ T>0$ and $ \delta >0$ such that 
 \begin{align}
  	&\lim_{ t \rightarrow \infty  }  \sup_{ k\in \Gamma, | \lambda  |\le L } \left| \Lambda ^{ k}\big(\lambda\big ) - \frac{ 1}{ t} \log E\Big[ \exp \Big\{   t\lambda Y_{ k,t} \Big \}\Big]\right|=0,  \label{eq:cond1}  \\
	& 	\sup_{ k\in \Gamma ,t\ge T,  | \lambda  |\le L} \left| \frac{ 1}{t }  \log E\big[ \exp \{   t\lambda Y_{ k,t}\big] \right|  < \infty ,\label{eq:cond2} \\
	 &\inf_{ k\in \Gamma }\left| (\Lambda ^{ k})^{ \prime }(\lambda ) \right|\to \infty \ \ \ \ \mbox{ whenever }  | \lambda  |\to \infty, \label{eq:cond3}\\
\intertext{and}
	&\left| (\Lambda ^{ k})^{ \prime }(\lambda _{ 1})-(\Lambda ^{ k})^{ \prime }(\lambda _{ 2}) \right|<\epsilon \ \ \ \ \mbox{ for all }|\lambda _{ 1}-\lambda _{ 2}|<\delta, \lambda_{ 1},\lambda _{ 2}\in [-L,L],k\in \Gamma .\label{eq:cond:uniform:cont}	 
\end{align}
Then for any  closed set $ F\subset \mathbb{R}$
\begin{equation}\label{eq:upper:limit:gen}
	\limsup_{ t \rightarrow \infty  } \sup_{ k\in \Gamma } \left\{\frac{ 1}{ t}\log P \left[ Y_{ k,t}\in F \right] +\inf_{ x\in F}\Lambda ^{ k*}(x)\right\}\le 0
\end{equation}
and for any  open set $ G\subset \mathbb{R}$
\begin{equation}\label{eq:lower:bound:gen}
	\liminf_{ t \rightarrow \infty  } \inf_{ k \in \Gamma } \left\{\frac{ 1}{ t}\log P \left[ Y_{ k,t}\in G\right] +\inf_{ x\in  G}\Lambda ^{ k*}(x)\right\}\ge 0	
\end{equation}
where the rate function $ \Lambda ^{ k*}(\cdot)$ is the Fenchel-Legendre transform of $ \Lambda ^{ k}(\cdot)$.
\end{theorem}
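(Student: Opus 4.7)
The strategy is to adapt the classical G\"artner--Ellis proof---upper bound by exponential Chebyshev, lower bound by exponential change of measure---while verifying that every error term is uniform in $k\in\Gamma$ by means of \eqref{eq:cond1}--\eqref{eq:cond:uniform:cont}. I would first extract a few preliminary consequences. Combining \eqref{eq:cond1} with \eqref{eq:cond2} shows $\Lambda^k(\lambda)$ is itself uniformly (in $k$) bounded on compact $\lambda$-intervals, and convexity---inherited from the log-MGFs $\tfrac{1}{t}\log E[e^{t\lambda Y_{k,t}}]$---then forces the asymptotic means $\bar x_k:=(\Lambda^k)'(0)$ to be uniformly bounded. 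By \eqref{eq:cond3} each $(\Lambda^k)'$ is a steep nondecreasing surjection $\mathbb{R}\to\mathbb{R}$, and $\bigl((\Lambda^k)'\bigr)^{-1}$ (any measurable branch) is uniformly bounded on bounded subsets of $\mathbb{R}$; consequently the transforms $\Lambda^{k*}$ are finite everywhere, and on any bounded interval they are uniformly (in $k$) Lipschitz and uniformly bounded.

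For the upper bound \eqref{eq:upper:limit:gen} on a closed set $F$, split $F$ at $\bar x_k$ into $F\cap[\bar x_k,\infty)$ and $F\cap(-\infty,\bar x_k]$ and let $a_k^\pm$ denote the elements of these pieces nearest $\bar x_k$. By convexity of $\Lambda^{k*}$ and its vanishing at $\bar x_k$, $\inf_{x\in F}\Lambda^{k*}(x)=\min\bigl(\Lambda^{k*}(a_k^+),\Lambda^{k*}(a_k^-)\bigr)$. Since $\bar x_k$ is uniformly bounded and $F$ is fixed, the $a_k^\pm$ are uniformly bounded, and so the optimal tiltings $\lambda_k^\pm:=\bigl((\Lambda^k)'\bigr)^{-1}(a_k^\pm)$ lie in a compact set independent of $k$ by \eqref{eq:cond3}. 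Chebyshev's inequality with these tiltings, combined with \eqref{eq:cond1} used uniformly on this compact range, then yields
\[
\tfrac{1}{t}\log P[Y_{k,t}\in F]\le -\min\bigl(\Lambda^{k*}(a_k^+),\Lambda^{k*}(a_k^-)\bigr)+o(1),
\]
which is \eqref{eq:upper:limit:gen}.

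For the lower bound \eqref{eq:lower:bound:gen} on an open set $G$, fix $\eta>0$ and, for each $k\in\Gamma$, choose $x_k\in G$ with $\Lambda^{k*}(x_k)\le\inf_{x\in G}\Lambda^{k*}(x)+\eta/2$ together with a radius $r=r(\eta)>0$, uniform in $k$, such that $(x_k-r,x_k+r)\subset G$. This is possible because near-minimizers are confined to a uniformly bounded interval on which $\Lambda^{k*}$ is uniformly Lipschitz, so $x_k$ may be nudged into the interior of $G$ by a $k$-independent amount at the cost of at most $\eta/2$ in $\Lambda^{k*}$. With $\lambda_k:=\bigl((\Lambda^k)'\bigr)^{-1}(x_k)$, also uniformly bounded, introduce the exponential tilt $d\widetilde P_{k,t}/dP=e^{t\lambda_k Y_{k,t}}/E[e^{t\lambda_k Y_{k,t}}]$ and use
\[
P\bigl[|Y_{k,t}-x_k|<r\bigr]=E\bigl[e^{t\lambda_k Y_{k,t}}\bigr]\,\widetilde E_{k,t}\bigl[e^{-t\lambda_k Y_{k,t}}\mathbf{1}_{|Y_{k,t}-x_k|<r}\bigr].
\]
Bounding $e^{-t\lambda_k Y_{k,t}}\ge e^{-t\lambda_k x_k-t|\lambda_k|r}$ on the event, taking logs, and applying \eqref{eq:cond1}, reduces the task to the uniform weak law $\widetilde P_{k,t}[|Y_{k,t}-x_k|<r]\to 1$. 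A Chebyshev bound under $\widetilde P_{k,t}$ with a small auxiliary exponent $\mu>0$ controls the right-hand tail by $\exp\{t(\widetilde\Lambda_t^k(\mu)-\mu x_k-\mu r)\}$, and the Taylor-type identity
\[
\widetilde\Lambda^k(\mu)-\mu x_k=\int_0^\mu\bigl[(\Lambda^k)'(\lambda_k+s)-(\Lambda^k)'(\lambda_k)\bigr]\,ds
\]
combined with the uniform equicontinuity \eqref{eq:cond:uniform:cont} forces this tail to vanish uniformly in $k$; the left tail is symmetric.

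The main obstacle is precisely this uniform weak law under the family of tilted measures, which is exactly where \eqref{eq:cond:uniform:cont} is used essentially; everything else is the standard G\"artner--Ellis machinery, carried through with the index $k$ kept explicit so that the uniformity is visible at every step. A secondary technical point is the selection of $k$-uniformly interior near-minimizers $x_k\in G$, which depends on the uniform Lipschitz property of $\Lambda^{k*}$ derived from \eqref{eq:cond3}.
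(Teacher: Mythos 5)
Your proposal is correct in broad outline and follows the same Gärtner--Ellis template as the paper (exponential Chebyshev for the upper bound, exponential change of measure for the lower bound, with the uniformity in $k$ carried through each step), but it differs from the paper's proof in two technically substantive places. For the upper bound, the paper first treats compact $F$ via a finite open cover by sets $A_x$ on which the linear functional $\lambda^k_x(\cdot-x)$ is nearly nonnegative, then passes to general closed $F$ through a separate uniform exponential-tightness argument built from \eqref{eq:cond2}; you instead exploit the unimodality of $\Lambda^{k*}$ about its zero $\bar x_k=(\Lambda^k)'(0)$ to collapse $\inf_F\Lambda^{k*}$ to the two nearest points $a^\pm_k$ and use only two tiltings $\lambda^\pm_k$ per $k$, which handles bounded and unbounded $F$ in one stroke and avoids the two-stage reduction. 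For the lower bound, the paper first replaces the $k$-dependent near-minimizers by a \emph{finite} $\delta$-net of fixed centers $x_1,\dots,x_n\in G$ (so that a single interior radius $\delta'$ suffices for all of them) and then proves the uniform concentration of the tilted law by bootstrapping the already-proved upper bound \eqref{eq:upper:limit:gen} applied to the tilted family $\tilde\Lambda^k$; you instead work directly with $k$-dependent centers $x_k$ and prove the tilted weak law by a one-sided Chebyshev with a small auxiliary exponent $\mu$, where the integral identity $\tilde\Lambda^k(\mu)-\mu x_k=\int_0^\mu[(\Lambda^k)'(\lambda_k+s)-(\Lambda^k)'(\lambda_k)]\,ds$ together with \eqref{eq:cond:uniform:cont} kills the exponent uniformly --- this is morally the same calculation that lives inside the paper's bootstrap, just unpacked. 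The one place where your write-up is genuinely thinner than the argument requires is the assertion that one can choose $x_k\in G$ with a \emph{$k$-uniform} interior radius $r=r(\eta)$ at a cost $\le\eta/2$ in $\Lambda^{k*}$. This is not a consequence of uniform Lipschitzness alone: the nudged point must also be uniformly far from $G^c$, and for a general open $G$ (e.g.\ a union of shrinking intervals accumulating near the cluster of minimizers $\bar x_k$) that is not obvious from ``nudging by a $k$-independent amount.'' The statement is in fact true --- it follows, for instance, from the Arzelà--Ascoli precompactness of $\{\Lambda^{k*}\}$ on $[-M,M]$ together with $G_r\uparrow G$ --- but it needs its own small argument; the paper circumvents the issue entirely by replacing the $k$-dependent near-minimizers with a fixed finite set of centers, for which a common interior radius is immediate. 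So: correct strategy, a cleaner upper-bound reduction and a more explicit weak-law computation, at the price of one auxiliary uniform-interior lemma you should state and prove rather than wave at.
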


\begin{remark}\label{rem:thm3:1:comments} 
It can be observed from the proof below that conditions \eqref{eq:cond1}, \eqref{eq:cond2} and \eqref{eq:cond3} have been used to prove  \eqref{eq:upper:limit:gen}, whereas, all the conditions \eqref{eq:cond1}-\eqref{eq:cond:uniform:cont} are required for proving \eqref{eq:lower:bound:gen}. Condition \eqref{eq:cond1} requires that the normalized log-m.g.f.s of $Y_{ k,t}$ converges to $\Lambda ^{ k}(\lambda )$ uniformly over $ k\in \Gamma $ and locally uniformly in $ \lambda \in \mathbb{R}$. Condition \eqref{eq:cond2} ensures uniform exponential tightness of the random variables $ (Y_{ k,t})$. Condition \eqref{eq:cond3} is the equivalent of the steepness assumption imposed by the Gartner-Ellis theorem, cf. \citep[Theorem 2.3.6, p.44]{dembo:zeitouni:1998}. Condition \eqref{eq:cond:uniform:cont} requires that the functions $ (\Lambda ^{ k })^{ \prime }(\lambda )$ are continuous  in $ \lambda $, uniformly over $ k\in \Gamma $ and $ \lambda $ in a compact subset of $ \mathbb{R}$. This ensures that the Fenchel-Legendre transforms $ \Lambda ^{ k*}(x)$ are continuous in $ x$, uniformly over $ k\in \Gamma $ and $ x$ in compact subsets of $ \mathbb{R}$. 
\end{remark}

\begin{proof} 
We will first prove \eqref{eq:upper:limit:gen}. As \eqref{eq:upper:limit:gen} holds trivially when $ F=\emptyset$, we can safely  assume that $ F$ is non-empty. To begin with suppose  $ F$ is compact.  Fix any $ x\in F$ and $ \delta >0$. Since $ \Lambda ^{ k}(\cdot)$ is convex, continuously differentiable and satisfies \eqref{eq:cond3}, we can find  $ \lambda ^{ k}_{ x}\in \mathbb{R}$ such that 
\[ 
 	\big( \Lambda ^{ k} \big) ^{ \prime }(\lambda^{ k}_{ x})=x.
\]
This would imply
\[ 
 	\Lambda ^{ k*}(x)=\sup_{ \lambda \in \mathbb{R}}\{   \lambda x-\Lambda ^{ k}(\lambda )  \}=\lambda ^{ k}_{ x}x-\Lambda ^{ k}(\lambda^{ k}_{ x} ). 
\]
From \eqref{eq:cond3} we also know  that $ \{   \lambda ^{ k}_{ x} :k\in \Gamma  \}$ is a bounded set. Hence we can find an open neighborhood $ A_{ x}$ of $ x$ such that 
\[ 
 	\inf_{ y\in A_{ x}} \lambda ^{ k}_{ x}(y-x)\ge -\delta \ \ \ \ \mbox{ for all } k\in \Gamma.   
\] 
Then by Chebychev's inequality we  get an upper bound for the following probability
\[
	P\big[ Y_{ k,t}\in A_{ x} \big]    \le E\Big[ \exp \Big\{ \lambda ^{ k}_{ x}t\big(Y_{ k,t}-x\big) \Big\}  \Big] \exp\Big\{- t\inf_{ y\in A_{ x}} \lambda ^{ k}_{ x}(y-x) \Big\} 
\]
which implies
\[
	 \frac{ 1}{t }\log P\big[ Y_{ k,t}\in A_{ x} \big]     \le \frac{ 1}{ t} \log E \Big[ \exp \Big\{ \lambda ^{ k}_{ x}tY_{ k,t} \Big\}  \Big]-\lambda ^{ k}_{ x}x+\delta .
\]
From \eqref{eq:cond1} we can get $ T\ge 1$ such that for all $ t\ge T$ and $ k\in \Gamma $
\[ 
 	\frac{ 1}{ t} \log E \Big[ \exp \Big\{ \lambda ^{ k}_{ x}tY_{ k,t} \Big\}  \Big]\le \Lambda ^{ k}\big( \lambda ^{ k}_{ x} \big) +\delta 
\]
and this means for $ t\ge T$
\begin{equation}\label{eq:small:ball}
	 \frac{ 1}{t }\log P\big[ Y_{ k,t}\in A_{ x} \big]     \le\Lambda ^{ k}\big( \lambda ^{ k}_{ x} \big)-\lambda ^{ k}_{ x}x+2\delta=-\Lambda ^{ k*}(x)+2\delta  .
\end{equation}
Now, obviously $ \cup_{ x\in F}A_{ x}$ is an open  cover of $ F$ and since $ F$ is compact we can obtain $ x_{ 1},\ldots,x_{ N}\in F$, such that 
$ F\subset \cup_{ 1\le i\le N} A_{ x_{ i}}$. Then by a simple union of events bound we get for $ t\ge T$
\[ 
 	  \frac{ 1}{t }\log P\big[ Y_{ k,t}\in F \big]+\min_{ 1\le i\le N} \Lambda ^{ k*}(x_{ i}) \le \frac{ 1}{t }\log N+2\delta \ \ \ \ \mbox{ for all }k\in \Gamma .
\]
It is now easy to see that for $ t\ge T$
\[ 
 	   \sup_{ k\in \Gamma }\left\{\frac{ 1}{t }\log P\big[ Y_{ k,t}\in F \big]+\inf_{ x\in F} \Lambda ^{ k*}(x)\right\} \le \frac{ 1}{t }\log N+2\delta 
\]
and since $ 0<\delta<1 $ is arbitrary 
\begin{equation}\label{eq:upper:bound:compact} 
 	\limsup_{ t \rightarrow\infty  }      \sup_{ k\in \Gamma }\left\{\frac{ 1}{t }\log P\big[ Y_{ k,t}\in F \big]+\inf_{ x\in F} \Lambda ^{ k*}(x)\right\}\le 0.
\end{equation}
This proves \eqref{eq:upper:limit:gen} when $ F$ is compact.

Next we extend the above result to any non-empty closed set $ F$. First we note a few facts. Using  \eqref{eq:cond1} and \eqref{eq:cond2}  we get that for any $ \delta >0$
\[ 
 	c:=\sup_{ k\in \Gamma,  | \lambda  |<\delta }   \big| \Lambda ^{ k}(\lambda ) \big| <\infty.
\]
Since $ \{   \lambda ^{ k}_{ x}:k\in \Gamma   \}$ is bounded and $ \Lambda ^{ k*}(x)=\lambda ^{ k}_{ x}x-\Lambda ^{ k}(\lambda ^{ k}_{ x})$ we get that $ \sup_{ k\in \Gamma }\Lambda ^{ k*}(x)< \infty$.
Furthermore, for all $ k\in \Gamma $
\[ 
 	\Lambda ^{ k*}(x)=\sup_{ \lambda \in \mathbb{R}}\big\{ \lambda x-\Lambda ^{ k}(\lambda ) \big\}    \ge \sup_{| \lambda |< \delta }\big\{ \lambda x-\Lambda ^{ k}(\lambda ) \big\}\ge \delta  | x |-c.
\] 
Hence for any closed set $ F$ there exists $ M_{ 1}>0$ such that 
\begin{equation}\label{eq:uniform:good:ratefn} 
 	\inf_{ x\in F}\Lambda ^{ k*}(x)=\inf_{ x\in F\cap[-M_{ 1},M_{ 1}]} \Lambda ^{ k*}(x)   \ \ \ \ \mbox{ for all }k\in \Gamma .
\end{equation}
Also note that for any $ k\in \Gamma $ and $ t\ge 1$
\[ 
 	\frac{ 1}{t }\log P\big[ |Y_{ k,t}|>\theta  \big] \le -\theta +\sup_{ k\in \Gamma ,t\ge 1}\frac{ 1}{t }\log E\Big[ e^{ tY_{ k,t}} \Big]    +\sup_{ k\in \Gamma,t\ge 1 }\frac{ 1}{t }\log E\Big[ e^{- tY_{ k,t}} \Big]  
\]
and therefore
\begin{equation}\label{eq:exp:tight} 
 	\lim_{ \theta  \rightarrow \infty  } \limsup_{ t \rightarrow\infty  } \sup_{ k\in \Gamma }    \frac{ 1}{t }\log P\big[ |Y_{ k,t}|>\theta  \big]=-\infty.
\end{equation}
Now set 
\[ 
 	c^{ \prime }=\sup_{ k\in \Gamma }\inf_{ x\in F} \Lambda ^{ k*}(x)   
\]
Since for any $ x$, $ \sup_{ k\in \Gamma }\Lambda ^{ k*}(x)< \infty$ we get that $ c^{ \prime }<\infty$. Note that if $ c^{ \prime }=0$ then the proof is immediate. So we look into the case when $ c^{ \prime }>0$. Using \eqref{eq:exp:tight} we can  get $ M_{ 2}>0$ such that 
\[ 
 	P\big[| Y_{ k,t}|>M_{ 2} \big]\le e^{ -2c^{ \prime }t} \ \ \ \ \mbox{ for all }k\in \Gamma ,t\ge 1.    
\]
Let $ M=\max\{   M_{ 1},M_{ 2}  \}$. Note that from \eqref{eq:upper:bound:compact} and \eqref{eq:uniform:good:ratefn} 
\begin{align*}
	&    	\limsup_{ t \rightarrow\infty  }      \sup_{ k\in \Gamma }\left\{\frac{ 1}{t }\log P\big[ Y_{ k,t}\in F\cap[-M,M] \big]+\inf_{ x\in F} \Lambda ^{ k*}(x)\right\}\\
	&  =  	\limsup_{ t \rightarrow\infty  }      \sup_{ k\in \Gamma }\left\{\frac{ 1}{t }\log P\big[ Y_{ k,t}\in F\cap[-M,M] \big]+\inf_{ x\in F\cap[-M,M]} \Lambda ^{ k*}(x)\right\}\le 0.
 \end{align*}
 This means that for any given $ \delta >0$ we can find $ T\ge 1$ such that 
 \[ 
 	 \frac{ 1}{t }\log P\big[ Y_{ k,t}\in F\cap[-M,M] \big]+\inf_{ x\in F} \Lambda ^{ k*}(x)\le \delta \ \ \ \ \mbox{ for all }k\in \Gamma ,t\ge T.  
\]
Now if $ P[Y_{ k,t}\in F\cap[-M,M] ]\le P[  | Y_{ k,t} |>M]$ then 
\[ 
 	\frac{ 1}{t }\log P\big[ Y_{ k,t}\in F \big]   \le \frac{ 1}{t }\log 2-2c^{ \prime }.
\]
Otherwise,
\[ 
 	\frac{ 1}{t }\log P\big[ Y_{ k,t}\in F \big]   \le \frac{ 1}{t }\log 2+\frac{ 1}{t }\log P\big[ Y_{ k,t}\in F\cap[-M,M] \big] .
\]
Therefore, in both the cases,
 \[ 
 	 \frac{ 1}{t }\log P\big[ Y_{ k,t}\in F \big]+\inf_{ x\in F} \Lambda ^{ k*}(x)\le \frac{ 1}{t }\log 2+\delta \ \ \ \ \mbox{ for all }k\in \Gamma ,t\ge T.  
\]
and hence
\[ 
 	\limsup_{ t \rightarrow\infty  }      \sup_{ k\in \Gamma }\left\{\frac{ 1}{t }\log P\big[ Y_{ k,t}\in F \big]+\inf_{ x\in F} \Lambda ^{ k*}(x)\right\}\le 0.
\]
This completes the proof of \eqref{eq:upper:limit:gen}.

We will now prove \eqref{eq:lower:bound:gen}.   Note that we  can find $ M>0$ such that 
\[ 
 	   \inf_{ x\in G}\Lambda ^{ k*}(x)=\inf_{ x\in G\cap[-M,M]}\Lambda ^{ k*}(x) \ \ \ \ \mbox{ for all }k\in \Gamma  .
\]
Fix any $ \epsilon >0$ and get $ x^{ k}\in G\cap[-M,M]$ such that 
\[ 
 	\Lambda ^{ k*}(x^{ k})  <\inf_{ x\in G}\Lambda ^{ k*}(x)+\epsilon /2. 
\]
Another observation that we need to make is that we can find $ \delta >0$ such that  
\[ 
	\left| \Lambda ^{ k*}(x)-\Lambda ^{ k*}(y) \right|<\epsilon /2 \ \ \ \ \mbox{ for all }| x-y |<\delta,x,y\in [-M,M], k\in \Gamma .
\]
This follows easily from \eqref{eq:cond:uniform:cont}. Now obviously $ \cup_{ x\in G\cap[-M,M]}B_{ x,\delta }$ is an open cover of $ G\cap[-M,M]$, where $ B_{ x,\delta }=(x-\delta ,x+\delta )$. Since $ G\cap[-M,M]$ is precompact, we can find $ x_{ 1},\ldots,x_{ n}\in G\cap[-M,M]$ such that for all $ x^{ k}$ there exists $ 1\le i_{ k}\le n$ for which $  | x^{ k}-x_{ i_{ k}} |<\delta $. This implies that 
\[ 
 	\inf_{ 1\le i\le n} \Lambda ^{ k*}(x_{ i})<\inf_{ x\in G}\Lambda ^{ k*}(x)+\epsilon \ \ \ \ \mbox{ for all }k\in \Gamma .   
\]
For notational simplicity we define $ X=\{   x_{ 1},\ldots,x_{ n}  \}$. Let $ \delta ^{ \prime }>0$ be such that $ B_{ x,\delta^{ \prime } }\subset G$ for all $ x\in X$.  Now fix any $ x\in X$.  Define the random variables $ \tilde Y_{ k,t}$ by an exponential change of measure such that
\[ 
 	P\big[ \tilde Y_{ k,t}\in B \big]    = \frac{ E\big[ e^{ t\lambda ^{ k}_{ x}Y_{ k,t}}I_{ [Y_{ k,t}\in B]} \big] }{E\big[ e^{ t\lambda ^{ k}_{ x}Y_{ k,t}} \big] }
\]
Then 
\begin{align*}
	P\big[ Y_{ k,t}\in B_{ x,\delta^{ \prime } } \big] &  = E\big[ e^{ t\lambda ^{ k}_{ x}Y_{ k,t}} \big] E\Big[ e^{ -t\lambda ^{ k}_{ x}\tilde Y_{ k,t}}I_{ [\tilde Y_{ k,t}\in B_{ x,\delta^{ \prime } }]} \Big]   
 \end{align*}
 and
 \begin{align*} 
 	\frac{ 1}{t }\log   P\big[ Y_{ k,t}\in B_{ x,\delta^{ \prime } } \big]&= \frac{ 1}{t }\log  E\big[ e^{ t\lambda ^{ k}_{ x}Y_{ k,t}} \big]+ \frac{ 1}{t }\log E\Big[ e^{ -t\lambda ^{ k}_{ x}\tilde Y_{ k,t}}I_{ [\tilde Y_{ k,t}\in B_{ x,\delta^{ \prime } }]} \Big]\\
	& \ge \frac{ 1}{t }\log  E\big[ e^{ t\lambda ^{ k}_{ x}Y_{ k,t}} \big]-\lambda ^{ k}_{ x}x-  \left| \lambda ^{ k}_{ x} \right|\delta^{ \prime } + \frac{ 1}{t }\log P\big[ \tilde Y_{ k,t}\in B_{ x,\delta^{ \prime } } \big] .
\end{align*}
We claim that 
\begin{equation}\label{eq:exp:changeofmeasure}
	\lim_{ t \rightarrow \infty  }\inf_{ k\in \Gamma ,x\in X} \frac{ 1}{t }\log P\big[ \tilde Y_{ k,t}\in B_{ x,\delta^{ \prime } } \big] =0.
\end{equation}
To remain with the flow we  complete the proof of \eqref{eq:lower:bound:gen} assuming \eqref{eq:exp:changeofmeasure}, which we  prove at the end. Let $ M^{ \prime }>0$ be such that $  | (\Lambda ^{ k})^{ \prime }(\lambda ) |>M$ for all $ |\lambda |>M^{ \prime }$ and $ k\in \Gamma $. From assumption \eqref{eq:cond3} we know that $ M^{ \prime }<\infty$. We can also get $ T\ge 1$ such that for all $ t\ge T$ and $x\in X,$
\[ 
 	   \inf_{ k\in \Gamma } \frac{ 1}{t }\log P\big[ \tilde Y_{ k,t}\in B_{ x,\delta^{ \prime } } \big] \ge -\epsilon  
\]
and 
\[ 
 	\sup_{ k\in \Gamma } \left| \Lambda ^{ k}\big(\lambda^{ k}_{ x}\big ) - \frac{ 1}{ t} \log E\Big[ e^{    t\lambda ^{ k}_{ x}Y_{ t,k} }\Big]\right|<\epsilon .   
\]
This implies for all $ t\ge T,x\in X$ and $ k\in \Gamma $
\[ 
 	 \frac{ 1}{t }\log   P\big[ Y_{ k,t}\in G\big]\ge  \frac{ 1}{t }\log   P\big[ Y_{ k,t}\in B_{ x,\delta^{ \prime } } \big]\ge \Lambda ^{ k}(\lambda ^{ k}_{ x})-\lambda ^{ k}_{ x}x-M^{ \prime }\delta^{ \prime }-2\epsilon =-\Lambda ^{ k*}(x) -M^{ \prime }\delta^{ \prime }-2\epsilon.
\]
Since $ x\in X$ is arbitrary and $ M^{ \prime },\delta ^{ \prime }$ and $ \epsilon $ are independent of the choice of $ x$, we get for all $ t\ge T$ and $ k\in \Gamma $
\[ 
 	 \frac{ 1}{t }\log   P\big[ Y_{ k,t}\in G  \big]\ge-\inf_{ x\in X}\Lambda ^{ k*}(x)-M^{ \prime }\delta^{ \prime }-2\epsilon   \ge -\inf_{ x\in G}\Lambda ^{ k*}(x)-M^{ \prime }\delta^{ \prime }-3\epsilon.   
\]
Hence we get 
\[ 
 	\liminf_{ t \rightarrow\infty  }\sup_{ k\in \Gamma } \left\{\frac{ 1}{t }\log   P\big[ Y_{ k,t}\in G  \big]+\inf_{ x\in G}\Lambda ^{ k*}(x)\right\}\ge  -M^{ \prime }\delta^{ \prime }-3\epsilon .
\]
This completes the proof of \eqref{eq:lower:bound:gen}  since $ \delta^{ \prime }$ and $ \epsilon $ can be chosen arbitrarily close to $ 0$.

It now remains to prove \eqref{eq:exp:changeofmeasure}. Since $ X$ is a finite set, it suffices to show that for any $ x\in X$
\[
	\lim_{ t \rightarrow \infty  }\inf_{ k\in \Gamma } \frac{ 1}{t }\log P\big[ \tilde Y_{ k,t}\in B_{ x,\delta^{ \prime } } \big] =0.
\]
We will use the upper large deviation bound \eqref{eq:upper:limit:gen} for that purpose. Note that 
\begin{align*} 
 	\frac{ 1}{t }\log E\big[ e^{ t\lambda \tilde Y_{ k,t}} \big]   & =\frac{ 1}{t }\log E\big[ e^{ t(\lambda +\lambda ^{ k}_{ x})Y_{ k,t}} \big] - \frac{ 1}{t }\log E\big[ e^{ t\lambda ^{ k}_{ x}Y_{ k,t}} \big] \\
	& \to \tilde \Lambda ^{ k}(\lambda ):= \Lambda ^{ k}(\lambda +\lambda ^{ k}_{ x})-\Lambda ^{ k}(\lambda ^{ k}_{ x}).
\end{align*}
It is easy to check that $ \tilde \Lambda ^{ k}(\cdot)$ inherits the properties \eqref{eq:cond1}, \eqref{eq:cond2}, \eqref{eq:cond3} and \eqref{eq:cond:uniform:cont} from $ \Lambda ^{ k}(\cdot)$.  Therefore, since $ B_{ x,\delta^{ \prime } }^{ c}:=\{   x\in \mathbb{R}:x\notin B_{ x,\delta^{ \prime } }  \}$ is a closed set, by \eqref{eq:upper:limit:gen} 
\begin{equation}\label{eq:upper:tildeY}
	\limsup_{ t \rightarrow\infty  } \sup_{ k\in \Gamma } \left\{  \frac{ 1}{t }\log P\big[ \tilde Y_{ k,t}\in B^{ c}_{ x,\delta^{ \prime } } \big] +\inf_{ y\in B_{ x,\delta ^{ \prime }}^{ c}} \tilde\Lambda ^{ k*}(y) \right\} \ge0.
\end{equation}
Note that $ (\tilde \Lambda ^{ k})^{ \prime }(0)=x$ for all $ k\in \Gamma $ and that implies $ \tilde \Lambda ^{ k*}(x)=0$ for all $ k\in \Gamma $. Since $ \tilde \Lambda ^{ k*}(\cdot)$ is nonnegative and convex $ \inf_{ y\in B^{ c}_{ x,\delta ^{ \prime }}}\tilde \Lambda ^{ k*}(y)\ge\min\{   \tilde\Lambda ^{ k*}(x-\delta ^{ \prime }),\tilde \Lambda ^{ k*}(x+\delta ^{ \prime })  \}$. Now get a compact set $ K^{ \prime }$ such that $ |(\tilde \Lambda ^{ k})^{ \prime }(\lambda )|>|x|+\delta ^{ \prime }$ and then  find $ \eta>0$ such that 
\begin{equation}\label{eq:uniform:cont:tildeLambda} 
 	\left| (\tilde\Lambda ^{ k})^{ \prime }(\lambda^{ \prime }) -(\tilde \Lambda ^{ k})^{ \prime }(\lambda ^{ \prime \prime })  \right|<\delta ^{ \prime  }/2 \ \ \ \ \mbox{ for all } | \lambda ^{ \prime }-\lambda ^{ \prime \prime } |<\eta, \lambda ^{ \prime },\lambda ^{ \prime \prime }\in K^{ \prime }, k\in \Gamma .   
\end{equation}
Then get $\tilde  \lambda ^{ k}_{ x+}$ and $ \tilde \lambda ^{ k}_{ x-}$ such that $ (\tilde \Lambda ^{ k})^{ \prime }(\tilde \lambda ^{ k}_{ x+})=x+\delta ^{ \prime }$ and $ (\tilde \Lambda ^{ k})^{ \prime }(\tilde \lambda ^{ k}_{ x-})=x-\delta ^{ \prime }$. From \eqref{eq:uniform:cont:tildeLambda} we know that $ \tilde \lambda ^{ k}_{ x+}>\eta$ and $ \tilde \lambda ^{ k}_{ x-}<-\eta$ for all $ k\in \Gamma $. 
Therefore, for all $ k\in \Gamma $
\begin{align*}
	\tilde \Lambda ^{ k*}(x+\delta ^{ \prime })&= \tilde \lambda ^{ k}_{ x+}(x+\delta ^{ \prime })-\tilde \Lambda ^{ k}(\tilde \lambda ^{ k}_{ x+})=\tilde \lambda ^{ k}_{ x+}(x+\delta ^{ \prime })-\int_{ 0}^{\tilde \lambda ^{ k}_{ x+} }(\tilde \Lambda ^{ k})^{ \prime }(z)dz\\    
	&  \ge \tilde \lambda ^{ k}_{ x+}(x+\delta ^{ \prime })-(x+\delta ^{ \prime }/2)\eta-(\tilde \lambda ^{ k}_{ x+}-\eta )(x+\delta ^{ \prime })=\eta\delta ^{ \prime }/2,\\
\intertext{and}
	\tilde \Lambda ^{ k*}(x-\delta ^{ \prime })&= \tilde \lambda ^{ k}_{ x-}(x-\delta ^{ \prime })-\tilde \Lambda ^{ k}(\tilde \lambda ^{ k}_{ x-})=\tilde \lambda ^{ k}_{ x+}(x+\delta ^{ \prime })+\int^{ 0}_{\tilde \lambda ^{ k}_{ x-} }(\tilde \Lambda ^{ k})^{ \prime }(z)dz\\    
	&  \ge \tilde \lambda ^{ k}_{ x-}(x-\delta ^{ \prime })+(x-\delta ^{ \prime }/2)\eta+(\tilde \lambda ^{ k}_{ x+}-\eta )(x-\delta ^{ \prime })=\eta\delta ^{ \prime }/2.
 \end{align*}
This implies that $ \min\{   \tilde\Lambda ^{ k*}(x-\delta ^{ \prime }),\tilde \Lambda ^{ k*}(x+\delta ^{ \prime })  \}\ge\eta\delta ^{ \prime }/2$ for all $ k\in \Gamma $ and hence using  \eqref{eq:upper:tildeY} we get
\[ 
 	\limsup_{ t \rightarrow\infty  } \sup_{ k\in \Gamma }\frac{ 1}{t }\log P\big[ \tilde Y_{ k,t}\in B_{ x,\delta ^{ \prime }}^{ c} \big]    \le -\eta\delta ^{ \prime }/2.
\]
This also means that 
\[ 
 	\lim_{ t \rightarrow\infty  }\inf_{ k\in \Gamma }P\big[ \tilde Y_{ k,t}\in B_{ x,\delta ^{ \prime }} \big]    = 1.
\]
This proves \eqref{eq:exp:changeofmeasure} and hence completes the proof of the theorem.
\end{proof}

Theorem \ref{thm:uldp} allows us to approximate the probability of deviation from 0 of the average $ \bar X(k,l)$ for different segments $ (k,l)$ when $ l-k$ is large. This is a vital component in the proof of Theorem \ref{thm:lss}.

\begin{theorem}\label{thm:uldp} 
If Assumption \ref{assmp:steep} holds then for any measurable set $ A\subset \mathbb{R}$
\begin{equation}\label{eq:upper:limit}
	\limsup_{ t \rightarrow \infty  } \sup_{ k\ge 0} \left\{\frac{ 1}{ t}\log P \left[ \bar X\big(kt,(k+1)t\big ) \in A \right] +\inf_{ x\in \bar A}\Lambda ^{ k*}(x)\right\}\le 0
\end{equation}
and 
\begin{equation}\label{eq:lower:bound}
	\liminf_{ t \rightarrow \infty  } \inf_{ k\ge 0} \left\{\frac{ 1}{ t}\log P \left[  \bar X\big(kt,(k+1)t \big) \in A \right] +\inf_{ x\in  A^{ \circ}}\Lambda ^{ k*}(x)\right\}\ge 0	
\end{equation}
where the rate function $ \Lambda ^{ k*}(\cdot)$ is the Fenchel-Legendre transform of 
\begin{equation}\label{eq:limit:LMGF}
	\Lambda ^{ k}(\lambda ):=\int_{ k}^{ k+1} \Lambda _{ \xi} \left( \frac{(\alpha +1)\lambda \phi  y^{ \alpha }}{(k+1)^{ \alpha +1}-k^{ \alpha +1}}\bar\beta   \right)dy,
\end{equation}
and $ \Lambda _{ \xi}(\cdot)$ is as defined in \eqref{eq:def:Lambdaxi}. 
\end{theorem}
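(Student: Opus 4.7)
The plan is to apply Theorem~\ref{thm:gen:uldp} to the collection $Y_{k,t}:=\bar X(kt,(k+1)t)$ indexed by $k\in\Gamma=\{0,1,2,\ldots\}$ with candidate limiting log-m.g.f.s $\Lambda^k(\cdot)$ as in \eqref{eq:limit:LMGF}. Monotonicity $P[Y\in A^\circ]\le P[Y\in A]\le P[Y\in\bar A]$ lets me deduce the desired \eqref{eq:upper:limit} and \eqref{eq:lower:bound} for a measurable $A$ from the closed-set and open-set conclusions \eqref{eq:upper:limit:gen}, \eqref{eq:lower:bound:gen} applied to $F=\bar A$ and $G=A^\circ$. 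So the entire task reduces to verifying the four hypotheses \eqref{eq:cond1}--\eqref{eq:cond:uniform:cont}.

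To verify \eqref{eq:cond1}, I decompose $(1/t)\log E[\exp(t\lambda Y_{k,t})]$ into its $Z$-driven and pure-noise contributions using independence. Setting $\Delta N_k(t):=N((k+1)t)-N(kt)\sim C\,t^{\alpha+1}[(k+1)^{\alpha+1}-k^{\alpha+1}]/(\alpha+1)$, the noise contribution $(\Delta N_k(t)/t)\Lambda_\varepsilon(t\lambda/\Delta N_k(t))$ is $O(t^{-\alpha})$ uniformly in $(k,\lambda)$ on compacts by Taylor-expanding $\Lambda_\varepsilon$ at $0$ (using $\Lambda_\varepsilon'(0)=0$). Using $\sum_i c_i\beta_i=C\bar\beta$ together with the MA expansion $Z(m)=\sum_j\phi_j\xi(m-j)$, the $Z$-contribution rewrites as $(1/t)\sum_s\Lambda_\xi(t\lambda C\psi_s^{k,t}\bar\beta/\Delta N_k(t))$ for weights $\psi_s^{k,t}=\sum_j\phi_j\lfloor(s+j)^\alpha\rfloor$ restricted to the indices with $s+j\in(kt,(k+1)t]$. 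The absolute summability of $(\phi_j)$ and $\sum_j\phi_j=\phi$ yield $\psi_s^{k,t}\approx\phi\lfloor s^\alpha\rfloor$ away from a boundary layer, so the main part is a Riemann sum in $y=s/t\in(k,k+1]$ converging to $\Lambda^k(\lambda)$ as given in \eqref{eq:limit:LMGF}.

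Conditions \eqref{eq:cond2}--\eqref{eq:cond:uniform:cont} rest on the elementary estimate $y^\alpha/[(k+1)^{\alpha+1}-k^{\alpha+1}]\le\tilde C_\alpha$ for $y\in[k,k+1]$ and $k\ge 0$, which follows from $(k+1)^{\alpha+1}-k^{\alpha+1}\ge(\alpha+1)k^\alpha$ when $k\ge 1$ and a direct bound on $[0,1]$. This makes the $\Lambda_\xi$-argument in \eqref{eq:limit:LMGF} uniformly bounded for $|\lambda|\le L$, giving a uniform bound on $\Lambda^k(\lambda)$ and hence \eqref{eq:cond2}; uniform continuity of $\Lambda_\xi'$ on the resulting compact range of arguments, transferred by differentiation under the integral, yields \eqref{eq:cond:uniform:cont}. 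Steepness \eqref{eq:cond3} is exactly where Assumption~\ref{assmp:steep} intervenes: the opposite estimate $y^\alpha/[(k+1)^{\alpha+1}-k^{\alpha+1}]\ge c_\alpha>0$ on $y\in[k+\tfrac14,k+\tfrac34]$ (again uniform in $k$) combined with $|(d/d\lambda)\Lambda_\xi(\lambda\bar\beta)|\to\infty$ as $|\lambda|\to\infty$ forces $|(\Lambda^k)'(\lambda)|\to\infty$ uniformly in $k$.

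The main obstacle is the uniform-in-$k$ boundary-layer analysis inside \eqref{eq:cond1}: the defect $|\psi_s^{k,t}-\phi\lfloor s^\alpha\rfloor|$ is concentrated on $s$-windows of width comparable to the memory of $(\phi_j)$ at each of the two endpoints of $(kt,(k+1)t]$. A standard truncation argument handles this: for fixed truncation level $J$, the contribution of $\phi$-tails beyond $J$ to the normalized log-m.g.f.\ is bounded by $2L\sup_{|\eta|\le L'}|\Lambda_\xi(\eta\bar\beta)|\sum_{|j|>J}|\phi_j|$ uniformly in $k$ and $t$ (using the uniform boundedness of $\Lambda_\xi$-arguments already noted), while the truncated boundary region covers only $O(J/t)$ of the indices $s\in(kt,(k+1)t]$ and so contributes $O(J/t)$ after division by $t$; sending $t\to\infty$ and then $J\to\infty$ gives uniform convergence. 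Once this technical step is in place, Theorem~\ref{thm:gen:uldp} delivers \eqref{eq:upper:limit:gen} and \eqref{eq:lower:bound:gen}, and hence the bounds \eqref{eq:upper:limit}--\eqref{eq:lower:bound}.
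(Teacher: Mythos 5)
Your proposal follows essentially the same route as the paper: apply Theorem~\ref{thm:gen:uldp} to $Y_{k,t}=\bar X(kt,(k+1)t)$, split the normalized log-m.g.f.\ into $Z$- and noise-driven parts by independence, kill the noise part via the quadratic behavior of $\Lambda_\varepsilon$ near $0$, treat the $Z$-part as a Riemann sum plus a boundary layer controlled by truncating the $\phi$-tail, and verify \eqref{eq:cond2}--\eqref{eq:cond:uniform:cont} from the uniform bound on $y^\alpha/[(k+1)^{\alpha+1}-k^{\alpha+1}]$. The one small slip is taking $\Gamma=\{0,1,2,\ldots\}$: the theorem's $\sup_{k\ge 0}$ (and its use in the proof of Theorem~\ref{thm:lss}, where $k=j/l$ is a general nonnegative rational) requires the supremum over all real $k\ge 0$, but your argument applies verbatim to that larger index set.
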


\begin{proof} 
The result will follow once we  check that the conditions of Theorem \ref{thm:gen:uldp} hold by setting 
\[ 
 	Y_{ k,t}:=\bar X(kt,(k+1)t)=    \frac{ S((k+1)t)-S(kt)}{N((k+1)t) -N(kt)} \ \ \ \ \mbox{ for all }t\in \mathbb{N}, k\in \mathbb{R}_{ +}.
\]
The most complicated part is to check the uniform convergence condition \eqref{eq:cond1}: for any $ 0<\Delta<\infty$
\begin{equation} \label{eq:unif:conv}
 	\lim_{ t \rightarrow \infty  } \sup_{ k\ge 0,|\lambda |\le \Delta} \left|  \Lambda ^{ k}(\lambda ) - \frac{ 1}{ t} \log E \exp\left\{t \lambda\bar X(kt,(k+1)t) \right\}\right|   =0.
\end{equation}
 We begin  by observing that for any $ u \in \mathbb{R}$
 \begin{align}\label{eq:firsteq}
	& \log E\Big[ \exp \Big\{ u \big( S((k+1)t)-S(kt) \big)  \Big\}  \Big] \nonumber    \\
	&  = \log E\left[ \exp \left\{ u\sum_{ l= kt+1}^{ (k+1)t}\sum_{ i=1}^{ K}\sum_{ j=1}^{ n_{ i}(l)} X_{ i,j}(l) \right\}  \right] \nonumber\\
	& = \log E\left[ \exp \left\{ u \sum_{ l=kt+1}^{ (k+1)t}\sum_{ i=1}^{ K} n_{ i}(l)\beta_{ i}^{ T} Z(l)+ u \sum_{ l=kt+1}^{ (k+1)t}\sum_{ i=1}^{ K}\sum_{ j=1}^{ n_{ i}(l)}\varepsilon_{ i,j}(l)   \right\}  \right]  \nonumber\\
	& = \log E\left[ \exp \left\{ u \sum_{ l=kt+1}^{ (k+1)t}\sum_{ i=1}^{ K} n_{ i}(l)\beta_{ i}^{ T} Z(l)  \right\}  \right]  + \log E\left[ \exp \left\{u\sum_{ l=kt+1}^{ (k+1)t}\sum_{ i=1}^{ K}\sum_{ j=1}^{ n_{ i}(l)}\varepsilon_{ i,j}(l)   \right\}  \right] ,
 \end{align} 
 where the last equality follows from the independence of the $ \varepsilon $'s and the $ Z$'s. 
 To understand the first component of \eqref{eq:firsteq}, define $ \beta =\sum_{ i=1}^{ K}c_{ i}\beta_{ i}$ and note that 
\begin{align*}
	& \log E\left[ \exp \left\{ u \sum_{ l=kt+1}^{ (k+1)t}\sum_{ i=1}^{ K} n_{ i}(l)\beta_{ i}^{ T} Z(l)  \right\}  \right]  \\    
	& = \log E\left[ \exp \left\{ u \sum_{ i=1}^{ K} \beta_{ i}^{ T}\Big(\sum_{ l=kt+1}^{ (k+1)t} n_{ i}(l)\sum_{ j= - \infty}^{  \infty}\phi_{ k}\xi(l-j)  \Big)\right\}  \right] \\ 
	& =\log E\left[ \exp \left\{ u \Big(\sum_{ i=1}^{ K} \beta_{ i}c_{ i}\Big)\cdot \Big(\sum_{ l=kt+1}^{ (k+1)t} \lfloor l^{ \alpha }\rfloor\sum_{ j= - \infty}^{  \infty}\phi_{ j}\xi(l-j)  \Big)\right\}  \right] \\ 
	& = \log E\left[ \exp \left\{ u \beta \cdot \Big(  \sum_{ j=- \infty}^{  \infty} \xi(j) \sum_{ l=kt+1}^{ (k+1)t}\lfloor l^{ \alpha }\rfloor \phi_{ l-j}  \Big)\right\}  \right] \\ 
	&  = \sum_{ j=- \infty}^{  \infty}  \Lambda _{ \xi} \left(u  \beta \sum_{ l=kt+1}^{ (k+1)t} \lfloor l^{ \alpha }\rfloor \phi_{ l-j} \right).
 \end{align*}
Using the triangle inequality we get the obvious bound
\begin{align}\label{eq:triangle}
	&    \lim_{ t \rightarrow \infty  } \sup_{ k\ge 0, | \lambda  |\le \Delta} \left|  \Lambda ^{ k}(\lambda ) - \frac{ 1}{ t} \log E \exp\left\{t \lambda\bar X(kt,(k+1)t) \right\}\right|\\
	&  \le \lim_{ t \rightarrow \infty  } \sup_{ k\ge 0, | \lambda  |\le \Delta} \left|  \Lambda ^{ k}(\lambda ) - \frac{ 1}{ t}  \sum_{ j=kt+1}^{ (k+1)t}  \Lambda _{ \xi} \left(\frac{ t\lambda }{N((k+1)t)-N(kt)  }  \beta \sum_{ l=kt+1}^{ (k+1)t}\lfloor l^{ \alpha }\rfloor \phi_{ l-j} \right)\right|\nonumber\\
	& \ \ + \ \ \lim_{ L \rightarrow \infty  } \lim_{ t \rightarrow \infty  } \sup_{ k\ge 0, | \lambda  |\le \Delta}\left|\frac{ 1}{t } \sum_{ j=- \infty}^{  kt-L}  \Lambda _{ \xi} \left(\frac{ t\lambda }{N((k+1)t)-N(kt)  }  \beta \sum_{ l=kt+1}^{ (k+1)t}\lfloor l^{ \alpha }\rfloor \phi_{ l-j} \right)\right|\nonumber\\
	& \ \ + \ \ \lim_{ L \rightarrow \infty  }\lim_{ t \rightarrow \infty  } \sup_{ k\ge 0, | \lambda  |\le \Delta}\left|\frac{ 1}{t } \sum_{ j=(k+1)t+L}^{ \infty}  \Lambda _{ \xi} \left(\frac{ t\lambda }{N((k+1)t)-N(kt)  }  \beta \sum_{ l=kt+1}^{ (k+1)t}\lfloor l^{ \alpha }\rfloor \phi_{ l-j} \right)\right|\nonumber\\
	& \ \ + \ \ \lim_{ t \rightarrow \infty  }  \sup_{ k\ge 0, | \lambda  |\le \Delta} \left| \frac{ 1}{ t}\sum_{\substack{ kt-L<j\le kt \mbox{ or }\\(k+1)t<j\le (k+1)t+L}}  \Lambda _{ \xi} \left(\frac{ t\lambda }{N((k+1)t)-N(kt)  }  \beta \sum_{ l=kt+1}^{ (k+1)t}\lfloor l^{ \alpha }\rfloor \phi_{ l-j} \right)\right|\nonumber\\
	& \ \  +\ \ \lim_{ t \rightarrow \infty }\sup_{ k\ge 0, | \lambda  |\le \Delta} \left| \frac{ 1}{ t}\log E\left[ \exp \left\{ \frac{t \lambda }{N((k+1)t)-N(kt) }\sum_{ l=1}^{ t}\sum_{ i=1}^{ K}\sum_{ j=1}^{ n_{ i}(l)}\varepsilon_{ i,j}(l)   \right\}  \right] \right|.\nonumber
 \end{align}
 
 We will prove \eqref{eq:unif:conv}  by showing that each of the term in the above expression is equal to 0. For that purpose we make use the following facts:
\begin{enumerate}[(i)] 
\item there exists $ M^{ \prime }>0$ such that 
\[ 
 	\frac{ t((k+1)t)^{ \alpha }}{ (kt+1)^{ \alpha }+\cdots+((k+1)t)^{ \alpha } }\le M^{ \prime } \ \ \ \ \mbox{ for all }t\ge 1, k\ge 0.   
\]
        \item Given any $ 0<\epsilon<1/2 $, there exists $ \kappa _{ 1} >0$ such that 
        \[ 
 	\big| \Lambda _{ \xi} (u)-\Lambda _{ \xi}(v) \big|    \le \kappa _{ 1} \|u-v\|  \ \ \ \ \mbox{ whenever }  \|u\|\le M,\|v\|\le M  \mbox{ and } \| u-v \|\le \epsilon  ,
\]
where $ \|\cdot \|$ denotes the sup-norm on $ \mathbb{R}^{ K}$ and
\[ 
 	M=M^{ \prime } \Delta    \|\bar\beta \| \sum_{ k=- \infty}^{  \infty} \big| \phi_{ k} \big| ,
\]
\item   and  there exists $ L\ge 1$ such that $\sum_{ |k|>L}|\phi_{ k}|< \epsilon /(M^{ \prime }\Delta \|\bar\beta\|) $.
\end{enumerate}
We get (ii) since $ \Lambda _{ \xi}(\cdot)$ is convex and differentiable (cf. Lemma 2.2.5 \cite{dembo:zeitouni:1998}) and (iii) follows from the summability of the coefficients $ (\phi_{ k})$. 

 Define the function $ f_{ t,k}:(k,k+1)\to \mathbb{R}$ by
\[ 
 	f_{ t,k}(y):=    \Lambda _{ \xi} \left( \frac{t \lambda }{N((k+1)t)-N(kt) }  \Big(\sum_{ l=kt+1}^{ (k+1)t}\lfloor l^{ \alpha }\rfloor \phi_{ l- \lceil ty \rceil} \Big)  \beta\right).
\]
and note that 
\[ 
 	 \frac{ 1}{ t}  \sum_{ j=kt+1}^{ (k+1)t}  \Lambda _{ \xi} \left(\frac{ t\lambda }{N((k+1)t)-N(kt)  }  \beta \sum_{ l=kt+1}^{ (k+1)t}\lfloor l^{ \alpha }\rfloor \phi_{ l-j} \right)= \int_{ k}^{ k+1}f_{ t,k}(y)dy. 
\]
 Choose $ t$ large enough such that $ kt+1\le \lceil ty \rceil-L$, $ \lceil ty \rceil+L\le (k+1)t$  and 
\[ 
 	  \left| \frac{t\lfloor l^{ \alpha }\rfloor}{N((k+1)t)-N(kt) } - \frac{ (\alpha +1)y^{ \alpha }}{C((k+1)^{ \alpha +1}-k^{ \alpha +1}) }\right| \le \frac{ \epsilon }{ \Delta \|\beta \|    } \Big( \sum_{ k=- \infty}^{  \infty}\big| \phi_{ k} \big|  \Big)^{ -1}, 
\]
for all $ k\ge 0$, $ k+\epsilon <y<k+1-\epsilon $ and $ \lceil ty \rceil-L\le l\le \lceil ty \rceil+L$. It is easy to check that for $ y$ in this range and $ |\lambda|\le \Delta$
\[ 
 	\left\|   \frac{ t\lambda \beta }{N((k+1)t)-N(kt) }  \sum_{ l=kt+1}^{ (k+1)t}\lfloor l^{ \alpha }\rfloor \phi_{ l-\lceil ty \rceil}  -  \frac{ t\lambda \beta }{N((k+1)t)-N(kt) }\sum_{ l=\lceil ty \rceil-L}^{ \lceil ty \rceil+L}\lfloor l^{ \alpha }\rfloor \phi_{ l- \lceil ty \rceil} \right\|   \le\epsilon 
\]
and
\[ 
 	\left\|     \frac{ t\lambda \beta }{N((k+1)t)-N(kt) }\sum_{ l=\lceil ty \rceil-L}^{ \lceil ty \rceil+L}\lfloor l^{ \alpha }\rfloor \phi_{ l- \lceil ty \rceil} -  \frac{ (\alpha +1)\lambda y^{ \alpha } \bar \beta }{(k+1)^{ \alpha +1}-k^{ \alpha +1} }  \sum_{ l=\lceil ty \rceil-L}^{ \lceil ty \rceil+L} \phi_{ l- \lceil ty \rceil}\right\|   \le \epsilon 
\]
and
\[ 
 	\left\|    \frac{ (\alpha +1)\lambda y^{ \alpha } \bar \beta }{(k+1)^{ \alpha +1}-k^{ \alpha +1} }  \sum_{ l=\lceil ty \rceil-L}^{ \lceil ty \rceil+L}\phi_{ l- \lceil ty \rceil}   -  \frac{ (\alpha +1)\lambda y^{ \alpha } \bar \beta }{(k+1)^{ \alpha +1}-k^{ \alpha +1} }  \sum_{ l=- \infty}^{ \infty} \phi_{ l}\right\|   \le \epsilon .
\]
This implies for all $ k\ge 0$, $ k+\epsilon <y<k+1-\epsilon $ and $  | \lambda  |\le \Delta$
\[ 
 	\left| \Lambda _{ \xi} \left( \frac{(\alpha +1)\lambda \phi  y^{ \alpha }}{(k+1)^{ \alpha +1}-k^{ \alpha +1}}\bar\beta   \right) - f_{ t,k}(y)\right| \le    3 \kappa _{ 1}\epsilon 
\]
and hence we get 
\begin{equation}\label{eq:mid:part}
	     \lim_{ t \rightarrow \infty  } \sup_{ k\ge 0, | \lambda  |\le \Delta} \left|  \Lambda ^{ k}(\lambda ) - \frac{ 1}{ t}  \sum_{ j=kt+1}^{ (k+1)t}  \Lambda _{ \xi} \left(\frac{ t\lambda }{N((k+1)t)-N(kt)  }  \beta \sum_{ l=kt+1}^{ (k+1)t}\lfloor l^{ \alpha }\rfloor \phi_{ l-j} \right)\right|  \le 3\kappa _{ 1}\epsilon +4 M_{ 1}\epsilon,
 \end{equation}
 where 
\begin{equation}\label{eq:M1} 
 	M_{ 1}= \max \left\{ \Lambda _{ \xi} \left(  M^{ \prime }\Delta \|\beta\| \sum_{ k=- \infty}^{ \infty}\big| \phi_{ k} \big|    \right), \Lambda _{ \xi} \left(  -M^{ \prime }\Delta \|\beta\| \sum_{ k=- \infty}^{ \infty}\big| \phi_{ k} \big|    \right)  \right\}   .
\end{equation}
Obviously, since $ \epsilon $ is arbitrary we get that the limit in \eqref{eq:mid:part} is 0.
 
  The other parts in \eqref{eq:triangle} are handled much easily. Note that for any $ k\ge 0$
\[ 
 	  \left| \sum_{ j=- \infty}^{  kt-L}  \Lambda _{ \xi} \left(\frac{ t\lambda }{N((k+1)t)-N(kt)  }  \beta \sum_{ l=kt+1}^{ (k+1)t}\lfloor l^{ \alpha }\rfloor \phi_{ l-j} \right)\right|\le \kappa _{ 1} M^{ \prime } \Delta \| \bar\beta\|  \sum_{ j=- \infty}^{kt  -L}  \sum_{ l=kt+1}^{ (k+1)t} \big| \phi_{ l-j} \big|  \le t \kappa _{ 1} \epsilon
\]
and hence
\begin{equation}\label{eq:k:le:L}
	\lim_{ L \rightarrow \infty  } \lim_{ t \rightarrow \infty  } \sup_{ k\ge 0, | \lambda  |\le \Delta}\left|\frac{ 1}{t } \sum_{ j=- \infty}^{  kt-L}  \Lambda _{ \xi} \left(\frac{ t\lambda }{N((k+1)t)-N(kt)  }  \beta \sum_{ l=kt+1}^{ (k+1)t}\lfloor l^{ \alpha }\rfloor \phi_{ l-j} \right)\right| =0.
\end{equation}
Using a similar argument we also get
\begin{equation}\label{eq:k:ge:nplusL}
	\lim_{ L \rightarrow \infty  }\lim_{ t \rightarrow \infty  } \sup_{ k\ge 0, | \lambda  |\le \Delta}\left|\frac{ 1}{t } \sum_{ j=(k+1)t+L}^{ \infty}  \Lambda _{ \xi} \left(\frac{ t\lambda }{N((k+1)t)-N(kt)  }  \beta \sum_{ l=kt+1}^{ (k+1)t}\lfloor l^{ \alpha }\rfloor \phi_{ l-j} \right)\right| =0.
\end{equation}
Furthermore, it is also easy to check that for every $ L\ge 1$
\begin{align}\label{eq:small:parts}
	& \lim_{ t \rightarrow \infty  }  \sup_{ k\ge 0, | \lambda  |\le \Delta} \left| \frac{ 1}{ t}\sum_{ \substack{kt-L<j\le kt \mbox{ or}\\(k+1)t<j\le (k+1)t+L}}  \Lambda _{ \xi} \left(\frac{ t\lambda }{N((k+1)t)-N(kt)  }  \beta \sum_{ l=kt+1}^{ (k+1)t}\lfloor l^{ \alpha }\rfloor \phi_{ l-j} \right)\right|\nonumber\\
	& \le \lim_{ t \rightarrow \infty  } \frac{ 1}{ t} 2(L+1) M_{ 1}=0.
\end{align}

For the final part of the proof of \eqref{eq:unif:conv}, we note  the following facts about $ \Lambda _{ \varepsilon }(\cdot) $: $ \Lambda _{ \varepsilon }(0)=0$, $\Lambda _{ \varepsilon }^{ \prime }(0)=0 $ because $ E(\varepsilon_{ i,j}(t))=0$, $ \Lambda_{ \varepsilon }(\cdot) $ is nonnegative and twice continuously differentiable in a neighborhood of $ 0$. The last fact can be easily derived following Lemma 2.2.5 in \cite{dembo:zeitouni:1998}. This implies that there exist positive constants $ \kappa $ and $ \eta$ such that 
\[ 
 	\big|\Lambda _{ \varepsilon }(u )   \big|\le \kappa u ^{ 2} \ \ \ \ \mbox{ for all } | u  |\le \eta.  
\]
Choose $ t$ large enough such that $  t\Delta /N(t) < \eta  $. This also means that $ |t \lambda /(N((k+1)t)-N(kt)) |< \eta  $  for all $ k\ge 0$ and $  | \lambda  |\le \Delta$. Hence, we have
\begin{align*}
	&   \left| \log E\left[ \exp \left\{ \frac{ t\lambda }{N((k+1)t)-N(kt) }\sum_{ l=1}^{ t}\sum_{ i=1}^{ K}\sum_{ j=1}^{ n_{ i}(l)}\varepsilon_{ i,j}(l)   \right\}  \right] \right|  \\
	& = \sum_{ l=kt+1}^{(k+1)t }\sum_{ i=1}^{ K}\sum_{ j=1}^{ n_{ i}(l)} \Lambda _{ \varepsilon }\left( \frac{t \lambda }{ N((k+1)t)-N(kt) } \right) \\
	 &\le   \big( N((k+1)t)-N(kt)  \big) \kappa \frac{ t^{ 2} \lambda ^{ 2}}{ \big( N((k+1)t)-N(kt)  \big)^{ 2} },
 \end{align*}
  This immediately gives us
\begin{align}\label{eq:err:part}
	&\lim_{ t \rightarrow \infty }\sup_{ k\ge 0,|\lambda |\le \Delta} \left| \frac{ 1}{ t}\log E\left[ \exp \left\{ \frac{ \lambda }{N((k+1)t)-N(kt) }\sum_{ l=1}^{ t}\sum_{ i=1}^{ K}\sum_{ j=1}^{ n_{ i}(l)}\varepsilon_{ i,j}(l)   \right\}  \right] \right| \nonumber \\	
	& \le \lim_{ t \rightarrow \infty } \kappa\sup_{ k\ge 0, | \lambda  |\le \Delta} \frac{ t \lambda }{ \big( N((k+1)t)-N(kt)  \big) }=0.
\end{align}
and that completes the proof of \eqref{eq:unif:conv}.

It is simpler to check the other conditions of Theorem \ref{thm:gen:uldp}. Note that we can find $ M$ such that 
\begin{equation}\label{eq:unif:bdd:coeff} 
 	\frac{ y^{ \alpha }}{(k+1)^{ \alpha +1}-k^{ \alpha +1} }  \le M \ \ \ \ \mbox{ for all }k\ge 0, k\le y\le k+1.
\end{equation}
This implies that for any $ \Delta>0$
\[ 
 	\sup_{ k\ge 0,  | \lambda  |\le \Delta}\left| \Lambda ^{ k}(\lambda ) \right|   <\infty,
\]
and this combined with \eqref{eq:unif:conv} shows that the condition \eqref{eq:cond2} holds.

Next we check that $ \Lambda ^{ k}(\cdot)$ is differentiable. Since $ \Lambda_{ \xi}(\cdot)$ is finite everywhere, by Lemma 2.2.5 in \cite{dembo:zeitouni:1998} we get that  $ \Lambda _{ \xi}(\cdot)$ is differentiable and 
\[ 
 	\Lambda _{ \xi}^{ \prime }(\eta )=\frac{ E\big[ \xi(0)e^{ \eta \cdot \xi(0)} \big] }{ E\big[ e^{ \eta \cdot \xi(0)} \big]  }   
\] 
 For any $ \delta $ satisfying $ 0<\|\delta\|<1 $ 
\[ 
 	ze^{ (\eta +\delta )\cdot z}-ze^{ \eta\cdot z}\to 0 \ \  \mbox{ and  }   \ \ \|ze^{ (\eta +\delta )\cdot z}-ze^{ \eta\cdot z}\|\le h(z):=\|z\|e^{ \eta \cdot z}(e^{  \|z\|}+1).
\]
Since $ E[h(\xi(0))]<\infty$ using the dominated convergence theorem we get that $ E[\xi(0)e^{ \lambda \xi(0)}]$ is  continuous. This implies that $ \Lambda ^{ \prime }_{ \xi}(\cdot )$ is continuous. Now we can use the Leibniz integral rule (cf. Theorem 7.40 in \cite{apostol1974mathematical}) to get that $ \Lambda ^{ k}(\cdot)$ is differentiable and 
\[ 
 	(\Lambda ^{ k})^{ \prime }(\lambda ):=\int_{ k}^{ k+1} \frac{(\alpha +1) \phi  y^{ \alpha }}{(k+1)^{ \alpha +1}-k^{ \alpha +1}}\bar\beta\cdot\Lambda _{ \xi}^{ \prime } \left( \frac{(\alpha +1)\lambda \phi  y^{ \alpha }}{(k+1)^{ \alpha +1}-k^{ \alpha +1}}\bar\beta   \right)dy 
\]
It is easy to see that $ \|\Lambda _{ \xi}^{ \prime }(\eta)\|\to \infty$ whenever $ \|\eta\|\to \infty$. This combined with \eqref{eq:unif:bdd:coeff} shows that \eqref{eq:cond3} holds. Finally, \eqref{eq:cond:uniform:cont} follows from the fact that $ \Lambda _{ \xi}^{ \prime }(\cdot)$ is continuous on compact sets and \eqref{eq:unif:bdd:coeff}. This completes the proof of the theorem.
\end{proof}

\end{section}

\begin{section}{Proof of Theorem \ref{thm:lss} and Required Lemmas}\label{sec:lss}

\begin{proof}[Proof of Theorem \ref{thm:lss}.]
We will first prove the lower inequality in \eqref{eq:Trresult}.  The inequality is obvious when $ I_{ *}=0$. Also, if $ A$ is nonempty then $ I_{ *}< \infty$ from Assumption \ref{assmp:steep}. So it suffices to consider  $ 0<I_{ *}< \infty$. We will use the simple  inclusion bound: for all $ m\ge 1$ and $ r\ge 1$
\[ 
 	   \{   T_{ r}(A)\le m  \}    \subset \bigcup_{ l=r}^{ \infty}   \bigcup_{ j=0}^{m -1} \left\{ \bar X(j,j+l)\in A \right\}.
\]
Thus we get
\[
	P\big[ T_{ r}(A)\le m \big] \le  \sum_{ l=r}^{ \infty} \sum_{ j=0}^{ m-1} P\big[ \bar X(j,j+l)\in A . \big] 
\]
Lemma~\ref{lem:property:Lambda} below shows that the $\Lambda^{k*}(x)$ are an increasing function of $k$ for fixed $x$. Lemma~\ref{lem:ratefn:largek}, which builds on this, gives the existence of a $ K_{ 0}$ such that 
\[ 
 	\inf_{ x\in \bar A} \Lambda ^{ k*}(x)\ge \inf_{ x\in \bar A} \Lambda^{ *} (x) -\epsilon /3=I_{ *}-\epsilon /3  \ \ \ \ \mbox{ for all }k\ge K_{ 0} .
\]
We can also find, from Lemma \ref{lem:max:ratefn}, a constant $  I>0$ such that  $  I\le \inf_{ x\in \bar A}\Lambda ^{ k*}(x)$ for all $ k\ge 0$.
Now for any $0< \epsilon <I$, by Theorem \ref{thm:uldp} we can get  $ T\ge 1$ such that for all $ l\ge T$ and all $ k\ge 0$
\[ 
 	   P\Big[ \bar X \big( kl,(k+1)l \big)\in A  \Big] \le \exp  \left\{ -l \left(\inf_{ x\in \bar A}\Lambda ^{ k*}(x) -\epsilon/3 \right)\right\}
\]
This gives us for $ r\ge T$
\begin{align*}
	P\big[ T_{ r}(A)\le m \big]&\le     \sum_{ l=r}^{ \infty} \sum_{ j=0}^{ m-1} P\Big[ \bar X(j,j+l)\in A  \Big] \\
	&  \le \sum _{ l=r}^{ \infty} \sum_{ j=0}^{ K_{ 0}l} P\Big[ \bar X(j,j+l)\in A  \Big]+ \sum _{ l=r}^{ \infty} \sum_{ j=K_{ 0} l}^{ m-1} P\Big[ \bar X(j,j+l)\in A  \Big]\\
	& \le   \sum _{ l=r}^{ \infty}K_{ 0}l e^{ -l( I-\epsilon /3)} +m  \sum _{ l=r}^{ \infty}e^{ -l(I_{ *}-2\epsilon /3)}.
 \end{align*}
Now set $ m=\lfloor e^{ r(I_{ *}-\epsilon )} \rfloor$ and note that
\begin{align*}
	&\sum_{ r=1}^{  \infty} P\big[ T_{ r}(A)\le\lfloor e^{ r(I_{ *}-\epsilon )} \rfloor\big]     \\
	& \le T+ \sum_{ r=T}^{  \infty} \sum _{ l=r}^{ \infty}K_{ 0}l e^{ -l( I-\epsilon /3)} +\sum_{ r=T}^{  \infty}  e^{ r(I_{ *}-\epsilon )} \sum _{ l=r}^{ \infty}e^{ -l(I_{ *}-2\epsilon /3)}< \infty.
 \end{align*}
 Hence, using the Borel-Cantelli lemma we get  
 \[ 
 	   \liminf_{ r \rightarrow  \infty} \frac{ \log T_{ r}(A)}{r }\ge I_{ *}-\epsilon  \ \ \ \ a.s.
\]
The lower inequality in \eqref{eq:Trresult} is thus proved by letting $ \epsilon \to 0$.

Also observe that, using the relation $ \{   T_{ r}(A)\le m  \}=\{   R_{ m}(A)\ge r  \}$ we get 
\[ 
 	\limsup_{ t \rightarrow \infty  } \frac{ R_{ t}}{ \log t}\le \frac{ 1}{I_{ *} } \ \ \ \  a.s.   
\]

In order to prove the upper bound in \eqref{eq:Trresult} it suffices to consider 
the case $I^*<\infty$. In that case the set $A$ has nonempty interior.
Define two new random variables by  
\[
	Y_{ k,t}^{ \prime }:= \beta \frac{\sum_{ j=kt+1}^{ (k+1)t} \sum_{ l=kt+1}^{ (k+1)t} \lfloor l^{ \alpha }\rfloor \phi_{ l-j} \xi(j)} { N((k+1)t)-N(kt)} \ \ \mbox{ and } \ \ Y_{ k,t}^{ \prime \prime }:=\bar X(kt,(k+1)t)-Y_{ k,t}^{ \prime },
\]
where, as before, $\beta =\sum_{ i=1}^{ K}c_{ i}\beta _{ i}$. 
 For a set $A$ and $\eta>0$, define
\[
A(\eta) :=\big\{x:d(x,A^c)>\eta\big\},
\]
and $d(x,A^c)$ is the distance from the point $x$ to the complement  $A^c$. Now observe that for any positive integers $ r$ and $ q$ with $ q>r$
\begin{align*}
	& P\big[T_r(A)>q\big]\\
	 & \le  P\left[  \bar X(kr,(k+1)r)\notin A, k=0,\ldots, \left\lfloor
	 q/r\right\rfloor  \right] \\ 
& \le  P\left[  Y_{ k,r}^{ \prime }\notin A(\eta),k=0,\ldots, \left\lfloor
	 q/r\right\rfloor \right] 
     + \sum_{l=1}^{\lfloor q/r \rfloor} P\left[
	 |Y_{ k,r}^{ \prime \prime }|>\eta\right]\\ 
	 \intertext{and since $ Y_{ k,r}^{ \prime },k=0,1,\ldots \lfloor q/r\rfloor$ are independent}
&= \prod_{ k=0}^{ \lfloor q/r\rfloor} \Big( 1-P\left[  Y_{ k,r}^{ \prime }\in A(\eta) \right] \Big) +  \sum_{l=1}^{\lfloor q/r \rfloor} P\left[
	 |Y_{ k,r}^{ \prime \prime }|>\eta\right]\\ 
& \le  \exp\Big(-\sum_{ k=0}^{ \lfloor q/r\rfloor}P\left[  Y_{ k,r}^{ \prime }\in A(\eta) \right] \Big)+
	 \sum_{l=1}^{\lfloor q/r \rfloor} P\left[
	 |Y_{ k,r}^{ \prime \prime }|>\eta\right]. 
\end{align*}

From the arguments following \eqref{eq:mid:part} it is easy to check that the law of $Y_{ k,t}^{ \prime } $ satisfy large deviation principle uniformly over $ k\ge 0$ with rate function $ \Lambda ^{ k*}(\cdot)$. We can therefore, get $ T\ge 1$ such that
\[ 
 	   \frac{ 1}{t } \log P\big[ Y_{ k,t}^{ \prime } \in A(\eta) \big] \ge - \inf_{ x\in A(\eta)}\Lambda ^{ k*}(x) -\epsilon /4 \ \ \ \ \mbox{ for all } t\ge T, k\ge 0.
\]
Lemma \ref{lem:property:Lambda}\emph{(ii)} then implies 
\[ 
 	   \frac{ 1}{t } \log P\big[ Y_{ k,t}^{ \prime } \in A(\eta) \big] \ge - \inf_{ x\in A(\eta)}\Lambda ^{ *}(x) -\epsilon /4 \ \ \ \ \mbox{ for all } t\ge T, k\ge 0.
\]
Hence for $ \eta>0$ small enough 
\[ 
 	   \frac{ 1}{t } \log P\big[ Y_{ k,t}^{ \prime } \in A(\eta) \big] \ge -I^{ *} -\epsilon /2 \ \ \ \ \mbox{ for all } t\ge T, k\ge 0.
\]
Therefore, by setting $ q_{ r}=\lceil e^{ r(I^{ *}+\epsilon )}\rceil$ and using the above inequality we get that 
\begin{align}\label{eq:lastbut1part}
	& \sum_{ r=1}^{ \infty}     \exp\Big(-\sum_{ k=0}^{ \lfloor q_{ r}/r\rfloor}P\left[  Y_{ k,r}^{ \prime }\in A(\eta) \right] \Big) \nonumber\\
	&  \le T+ \sum_{ r=T}^{ \infty} \exp \Big( - \frac{e^{ r(I^{ *}+\epsilon )} }{ r} e^{ -r(I^{ *}+\epsilon /2)} \Big)\nonumber\\
	& \le T +\sum_{ r=T}^{ \infty} \exp \Big( -\frac{e^{ r\epsilon /2}}{r}  \Big)< \infty.
 \end{align}
Furthermore, note that for $ \epsilon >0$ and $ \eta>0$ such that the above holds
\begin{align*}
	& \limsup_{ t \rightarrow \infty  }  \sup_{ k\ge 0}  \frac{ 1}{ t} \log P\big[  | Y_{ k,t}^{ \prime \prime } | > \eta\big] \\
	&  \le -\lambda \eta + \limsup_{ t \rightarrow \infty  }  \sup_{ k\ge 0} \frac{ 1}{t }\log E\big[ \lambda t |Y_{ k,t}^{ \prime \prime } \big] =- \lambda \eta.
 \end{align*}
 The last equality follows from the steps used in the proof of Theorem \ref{thm:uldp}. Now by choosing $ \lambda >(I^{ *}+\epsilon )/\eta$ we get
\begin{equation}\label{eq:lastpart}
    \sum_{ r=1}^{ \infty} \sum_{l=1}^{\lfloor q_{ r}/r \rfloor} P\left[ |Y_{ k,r}^{ \prime \prime }|>\eta\right]
  \le \sum_{ r=1}^{ \infty} \Big\lfloor \frac{q_{ r} }{r } \Big\rfloor \sup_{ k\ge 0}P\left[ |Y_{ k,r}^{ \prime \prime }|>\eta\right]< \infty.
\end{equation}
Combining \eqref{eq:lastbut1part} and \eqref{eq:lastpart} we get 
\[ 
 	   \sum_{ r=1}^{ \infty}P\big[T_r(A)>q\big]<\infty.
\]
Finally by applying the first Borel-Cantelli lemma and then letting $ \epsilon \to 0$ we complete the proof of the upper bound of 
\eqref{eq:Trresult}. The lower bound in \eqref{eq:Rtresult} is again proved using the same identity $ \{   T_{ r}(A)\le m  \}=\{   R_{ m}(A)\ge r  \}$. Hence the proof is complete.
\end{proof}

\begin{lemma}\label{lem:property:Lambda}
\begin{enumerate}[(i)] 
\item For any $ \lambda \in \mathbb{R}$, $ \Lambda ^{ k}(\lambda )$ is a decreasing function of $ k$.
        \item           For any $ x\in \mathbb{R}$, $ \Lambda ^{ k*}(x)$ is an increasing function of $ k$. 
\end{enumerate}
\end{lemma}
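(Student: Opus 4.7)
The plan is to prove (i) by differentiating $\Lambda^k(\lambda)$ in $k$ treated as a continuous parameter and showing the derivative is non-positive for every $\lambda$, and then to deduce (ii) from (i) by Legendre duality: if $\Lambda^{k_1}(\lambda)\ge \Lambda^{k_2}(\lambda)$ for all $\lambda$ whenever $k_1\le k_2$, then
\[
\Lambda^{k_1*}(x)=\sup_\lambda\{\lambda x-\Lambda^{k_1}(\lambda)\}\le \sup_\lambda\{\lambda x-\Lambda^{k_2}(\lambda)\}=\Lambda^{k_2*}(x),
\]
so (ii) is immediate.

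For (i), rewrite $\Lambda^k(\lambda)=\int_0^1 G_\lambda(\rho_k(v))\,dv$, where $G_\lambda(x):=\Lambda_\xi(\lambda\phi x\bar\beta)$ is convex in $x\in\mathbb{R}$ (being the composition of the convex function $\Lambda_\xi$ with a linear map) and
\[
\rho_k(v):=\frac{(\alpha+1)(k+v)^\alpha}{(k+1)^{\alpha+1}-k^{\alpha+1}},\qquad v\in[0,1],
\]
so that $\int_0^1 \rho_k(v)\,dv=1$ and $v\mapsto\rho_k(v)$ is strictly increasing (as $\alpha>0$). Writing $D_k=(k+1)^{\alpha+1}-k^{\alpha+1}$ and $D_k'=(\alpha+1)\bigl[(k+1)^\alpha-k^\alpha\bigr]$, direct differentiation gives
\[
\frac{d}{dk}\Lambda^k(\lambda)=\int_0^1 G_\lambda'(\rho_k(v))\,\frac{\partial\rho_k}{\partial k}(v)\,dv,\qquad \frac{\partial\rho_k}{\partial k}(v)=\rho_k(v)\Bigl[\frac{\alpha}{k+v}-\frac{D_k'}{D_k}\Bigr].
\]

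The key structural observation is that the bracket above is strictly decreasing in $v$, while differentiating the identity $\int_0^1\rho_k(v)\,dv\equiv 1$ in $k$ yields $\int_0^1 \partial_k\rho_k(v)\,dv=0$. These two facts together force the bracket to vanish at a unique interior point $v^*\in(0,1)$, with $\partial_k\rho_k(v)>0$ on $[0,v^*)$ and $\partial_k\rho_k(v)<0$ on $(v^*,1]$. Since $G_\lambda$ is convex, $G_\lambda'$ is non-decreasing, and precomposing with the non-decreasing map $\rho_k$ gives that $v\mapsto G_\lambda'(\rho_k(v))$ is non-decreasing. Using $\int_0^1 \partial_k\rho_k(v)\,dv=0$, we may recenter and conclude
\[
\frac{d}{dk}\Lambda^k(\lambda)=\int_0^1\bigl[G_\lambda'(\rho_k(v))-G_\lambda'(\rho_k(v^*))\bigr]\,\frac{\partial\rho_k}{\partial k}(v)\,dv\le 0,
\]
because the two bracketed factors have opposite signs on each side of $v^*$. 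This establishes (i), and (ii) then follows from the duality observation above.

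The main obstacle is the coupled sign-change and monotone rearrangement argument: once the representation $\Lambda^k(\lambda)=\int_0^1 G_\lambda(\rho_k)\,dv$ and the explicit formula for $\partial_k\rho_k$ are in hand, the conclusion reduces to a clean Chebyshev-type inequality driven by the single sign-change of $\partial_k\rho_k$ paired with the monotonicity of $G_\lambda'\circ\rho_k$ in $v$.
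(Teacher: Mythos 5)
Your proof is correct, but it takes a genuinely different route from the paper. The paper's proof (Lemma~\ref{lem:property:Lambda}) recasts $\Lambda^k(\lambda)=E[\Lambda_\xi(\lambda\phi U_k\bar\beta)]$ with $U_k:=\rho_k(U)$, $U\sim$ Uniform$(0,1)$, computes the Lorenz curve $L_{U_k}(p)=\bigl[(k+p)^{\alpha+1}-k^{\alpha+1}\bigr]/\bigl[(k+1)^{\alpha+1}-k^{\alpha+1}\bigr]$, shows it is pointwise increasing in $k$, and then invokes a result from majorization/Lorenz-order theory (cited as Theorem 3.2 of Arnold) to conclude that, for convex $\Lambda_\xi$, the expectation is decreasing in $k$. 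You instead differentiate $\Lambda^k(\lambda)=\int_0^1 G_\lambda(\rho_k(v))\,dv$ in $k$ directly and argue via a single-sign-change (Chebyshev-type) inequality: $\partial_k\rho_k$ integrates to zero and changes sign exactly once, while $v\mapsto G'_\lambda(\rho_k(v))$ is non-decreasing, so after recentering by $G'_\lambda(\rho_k(v^*))$ the integrand is non-positive throughout. Both hinge on the same two ingredients (convexity of $\Lambda_\xi$ composed with a linear map, and the monotone family $\rho_k$ normalized to mean one), so they are really the same structural fact seen from two angles; the paper gets there faster by citing known Lorenz-order theory, while your version is self-contained and elementary at the cost of a somewhat longer calculation. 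Both derive (ii) from (i) by the same one-line Legendre duality observation.
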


\begin{proof} 
Suppose $ F_{ k}$ is the distribution function of the random variables 
\[ 
 	U_{ k}:=\frac{ (\alpha +1)(k+U)^{ \alpha }}{ (k+1)^{ \alpha +1}-k^{ \alpha +1} } \ \ \ \ \mbox{ where }U\sim \mbox{Uniform}(0,1),k\ge 0.   
\]
Observe that $ E(U_{ k})=1$ for all $ k\ge 0$. Also, for any non-negative random variable $ X$ with mean 1 and distribution $ F_{ X}$, define the Lorenz function
\[ 
 	L_{ X}(p)  :=   \int_{ 0}^{p} F_{ X}^{ -1}(u) du, \ \ \ \ \mbox{ for all }0\le p\le 1. 
\]
Note that the Lorenz function of $ U_{ k}$ is given by
\[ 
 	L_{ U_{ k}}(p)= \frac{ (k+p)^{ \alpha +1}-k^{ \alpha +1}}{ (k+1)^{ \alpha +1} -k^{ \alpha +1} },   \ \ \ \ \mbox{ for all } 0\le p\le 1.
\]
and 
\[ \frac{\partial}{\partial k} L_{ U_{ k}}(p)= \frac{(\alpha +1) \big[(k+p)^{ \alpha }\left( (k+1)^{ \alpha }(1-p)+k^{ \alpha }p \right)-k^{ \alpha }(k+1)^{ \alpha }  \big]  }{ ((k+1)^{ \alpha +1}-k^{ \alpha +1})^{ 2} }>0  \] 
for all  $ k\ge 0$  and  $ 0\le p\le 1$. This implies that 
\[ 
 	L_{ U_{ k^{ \prime }}}(p)\ge L_{ U_{ k^{ \prime \prime }}}(p) \ \ \ \ \mbox{ for all }0\le p\le 1, k^{ \prime }\ge k^{ \prime \prime }
\]
which means that $ U_{ k}$ is decreasing in Lorenz order as $ k$ increases. Hence by \citep[Theorem 3.2, p.37]{Arnold.B:8000} and using the fact that $ \Lambda_{ \xi} (\cdot )$ is convex and continuous we get that 
\[ \Lambda ^{ k}(\lambda ) = E\big[ \Lambda _{ \xi}\big(\lambda \phi U_{ k}\bar \beta\big)  \big]  \]
 is decreasing in $ k$. 
 
 Part \emph{(ii)} of the lemma follows easily from part \emph{(i)} using the definition of Fenchel-Legendre transform.
\end{proof}

\begin{lemma}\label{lem:ratefn:largek} 
 For any measurable set $ A\subset \mathbb{R}$ and $ \epsilon >0$ there exists $ K_{ 0}$ such that 
 \[ 
 	 \inf_{ x\in A}\Lambda ^{ k*}(x) \ge \inf_{ x\in A}\Lambda ^{ *}(x)-\epsilon  \ \ \ \ \mbox{ for all }k\ge K_{ 0}
\]
where $ \Lambda ^{ k*}(\cdot)$ and $ \Lambda ^{ *}(\cdot)$ are as described in Theorem \ref{thm:uldp} and Theorem \ref{thm:lss}, respectively.
\end{lemma}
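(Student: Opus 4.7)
My plan combines the monotone structure supplied by Lemma \ref{lem:property:Lambda} with a Dini-type uniform convergence argument on a compact core. The strategy unfolds in three steps.

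First, I would establish the pointwise monotone convergence $\Lambda^{k*}(x)\uparrow\Lambda^*(x)$ as $k\to\infty$. Writing $w_k(y):=(\alpha+1)y^\alpha/[(k+1)^{\alpha+1}-k^{\alpha+1}]$ for $y\in[k,k+1]$, the mean value theorem yields $w_k(y)=(y/\eta_k)^\alpha$ for some $\eta_k\in(k,k+1)$, so $w_k(y)\to 1$ uniformly in $y$. A change of variables $y=k+u$ together with continuity of $\Lambda_\xi$ and dominated convergence then gives $\Lambda^k(\lambda)\to\Lambda(\lambda)$ locally uniformly in $\lambda$. Since Lemma \ref{lem:property:Lambda}(i) states that this convergence is monotone decreasing, I can interchange suprema to get
\[
\sup_k\Lambda^{k*}(x)=\sup_\lambda\bigl\{\lambda x-\inf_k\Lambda^k(\lambda)\bigr\}=\Lambda^*(x),
\]
so $\Lambda^{k*}(x)\uparrow\Lambda^*(x)$ pointwise.

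Second, I would use uniform-in-$k$ exponential tightness to reduce the infimum to a bounded slice of $A$. Set $C:=\sup_{|\lambda|\le 1}\Lambda^0(\lambda)$, which is finite by the preceding monotonicity. The elementary bound $\Lambda^{k*}(x)\ge \pm x-\Lambda^k(\pm 1)\ge |x|-C$ then holds for all $k\ge 0$ and $x\in\mathbb{R}$. Let $I:=\inf_{x\in A}\Lambda^*(x)$; the case $I=0$ is immediate since $\Lambda^{k*}\ge 0$, so assume $I>0$ and put $M:=C+I$. Then for $x\in A$ with $|x|>M$ we automatically have $\Lambda^{k*}(x)\ge I\ge I-\epsilon$, so only $A\cap[-M,M]$ needs further treatment.

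Third, I would apply Dini's theorem on the compact interval $[-M,M]$. Both $\Lambda^*$ and each $\Lambda^{k*}$ are finite convex functions on $\mathbb{R}$: the steepness of $\Lambda$ in Assumption \ref{assmp:steep}, and the steepness of $\Lambda^k$ verified within the proof of Theorem \ref{thm:uldp}, force the effective domains of these Legendre transforms to be all of $\mathbb{R}$, so each is continuous. The monotone pointwise convergence $\Lambda^{k*}\uparrow\Lambda^*$ of continuous functions to a continuous limit is therefore uniform on $[-M,M]$ by Dini's theorem, and there exists $K_0$ with $\Lambda^{k*}(x)\ge\Lambda^*(x)-\epsilon\ge I-\epsilon$ for every $k\ge K_0$ and $x\in A\cap[-M,M]$. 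Combining with the tail estimate from the second step yields $\inf_{x\in A}\Lambda^{k*}(x)\ge I-\epsilon$, as required.

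The delicate point is guaranteeing the continuity needed for Dini. This hinges on the steepness of $\Lambda^k$ and of $\Lambda$, the former established in the course of Theorem \ref{thm:uldp} and the latter imposed by Assumption \ref{assmp:steep}; once these are invoked, the remainder is routine.
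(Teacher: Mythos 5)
Your proof is correct, but it takes a genuinely different route from the paper's. The paper works entirely on the \emph{primal} side of the Legendre transform: it reduces to a compact $x$-window $[-M_1,M_1]$, uses the steepness hypothesis to locate a compact $\lambda$-window $[-M_2,M_2]$ containing near-optimal slopes $\lambda_x$ for $\Lambda^*$ uniformly in $x$, and then transfers the locally uniform convergence $\Lambda^k(\lambda)\to\Lambda(\lambda)$ on $[-M_2,M_2]$ across the transform via the chain $\Lambda^{k*}(x)\ge\lambda_x x-\Lambda^k(\lambda_x)\ge\lambda_x x-\Lambda(\lambda_x)-\epsilon/4\ge\Lambda^*(x)-\epsilon/2$. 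You instead work on the \emph{dual} side: you first establish, via the sup-interchange $\sup_k\Lambda^{k*}=(\inf_k\Lambda^k)^*$ and the downward monotonicity of Lemma~\ref{lem:property:Lambda}(i), that $\Lambda^{k*}\uparrow\Lambda^*$ pointwise, then invoke Dini's theorem on a compact $x$-window to upgrade this to uniform convergence. Both arguments need steepness and a tail reduction to a compact set, but they use steepness for different purposes (the paper to bound the $\lambda_x$, you to force finiteness and hence continuity of $\Lambda^{k*}$ and $\Lambda^*$ so Dini applies), and your route has to supply the extra fact $\Lambda^{k*}\uparrow\Lambda^*$ which the paper never needs. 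One small housekeeping point: since $k$ ranges over $[0,\infty)$ rather than over integers, you should either phrase Dini for a monotone net or pass to a sequence $k_n\uparrow\infty$ and then use the pointwise monotonicity in $k$ to close the gap; as the family is increasing this is routine. Alternatively, the Dini step can be replaced by the standard fact that pointwise convergence of finite convex functions on $\mathbb{R}$ to a finite convex limit is automatically locally uniform, which sidesteps the index-set issue entirely.
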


\begin{proof} 
Fix any $ \epsilon >0$. From the arguments leading to \eqref{eq:uniform:good:ratefn} we can find $ M_{ 1}>0$ such that 
\[ 
 	\inf_{ x\in A} \Lambda ^{ *}(x) =\inf_{ x\in A\cap [-M_{ 1},M_{ 1}]} \Lambda ^{ *}(x)   .
\]
Lemma \ref{lem:property:Lambda}\textit{(ii)} then gives us
\[ 
	\inf_{ x\in A} \Lambda ^{k *}(x) =\inf_{ x\in A\cap [-M_{ 1},M_{ 1}]} \Lambda ^{k *}(x)   \ \ \ \ \mbox{ for all }k\ge 0.
\]
Using Assumption \ref{assmp:steep} we get $ M_{ 2}>0$ such that $ |\lambda |>M_{ 2}$ implies $ |(\Lambda ^{ 0})^{ \prime }(\lambda )|>2M_{ 1}$. Since $ \Lambda ^{ k}(\cdot)$ converges locally uniformly to $ \Lambda (\cdot)$ we know that there exists $ K_{ 0}$ such that 
\begin{equation}\label{eq:1}
	\sup_{ \lambda \in [-M_{ 2},M_{ 2}]} \left| \Lambda ^{ k}(\lambda )-\Lambda (\lambda ) \right| <\epsilon /4 \ \ \ \ \mbox{ for all }k\ge K_{ 0}.
\end{equation}
Now, for any $ x\in [-M_{ 1},M_{ 1}]$ we can get $ \lambda _{ x}\in [-M_{ 2},M_{ 2}]$  such that $ \lambda _{ x}x-\Lambda (\lambda _{ x}) >\Lambda ^{ *}(x)-\epsilon /4$  and therefore for all $ k\ge K_{ 0}$
\[ 
 	\Lambda ^{ k*}(x)\ge \lambda _{ x}x-\Lambda ^{ k}(\lambda _{ x})\ge    \lambda _{ x}x-\Lambda (\lambda _{ x})-\epsilon /4\ge \Lambda ^{ *}(x)-\epsilon /2.
\]
This implies for all $ k\ge K_{ 0}$
\[ 
 	 \inf_{ x\in A\cap [-M_{ 1},M_{ 1}]}\Lambda ^{ k*}(x) \ge \inf_{ x\in A\cap [-M_{ 1},M_{ 1}]}\Lambda ^{ *}(x)-\epsilon  
\]
and that completes the proof.
\end{proof}

\begin{lemma}\label{lem:max:ratefn} 
For any measurable set $ A\subset \mathbb{R}$
\begin{equation}\label{eq:lem:max:ratefneq} 
 	 \inf_{ x\in A}\Lambda ^{ *}(x)>0 \ \ \ \ \mbox{implies} \ \ \ \ \inf _{ k\ge 0}\inf_{ x\in A}\Lambda ^{ k*}(x)>0.   
\end{equation}
\end{lemma}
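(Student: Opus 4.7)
The plan is to reduce the uniform positivity over $k\ge 0$ to a single pointwise statement at $k=0$, and then exploit the convexity and unique-zero property of $\Lambda^{0*}$.

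First I would invoke Lemma~\ref{lem:property:Lambda}\textit{(ii)}: since $\Lambda^{k*}(x)$ is nondecreasing in $k$ for every fixed $x$, we may interchange infima to get
\[
\inf_{k\ge 0}\inf_{x\in A}\Lambda^{k*}(x)=\inf_{x\in A}\inf_{k\ge 0}\Lambda^{k*}(x)=\inf_{x\in A}\Lambda^{0*}(x),
\]
so it suffices to prove $\inf_{x\in A}\Lambda^{0*}(x)>0$ under the standing hypothesis $c:=\inf_{x\in A}\Lambda^{*}(x)>0$.

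Next I would show that the hypothesis forces $A$ to be bounded away from the origin. Under Assumption~\ref{assmp:steep}, $\Lambda_\xi$ (and hence $\Lambda$) is finite, convex and $C^1$ on $\mathbb{R}$, so $\Lambda^{*}$ is continuous on $\mathbb{R}$ with $\Lambda^{*}(0)=0$ (because $\Lambda'(0)=\phi\,\bar\beta\cdot\Lambda_\xi'(0)=0$, the innovations being centered). By continuity at $0$, there exists $\delta>0$ such that $\Lambda^{*}(x)<c$ whenever $|x|<\delta$; consequently $|x|\ge\delta$ for every $x\in A$.

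Finally I would estimate $\Lambda^{0*}$ on the set $\{|x|\ge\delta\}$. The function $\Lambda^0$ is convex, differentiable, and satisfies $\Lambda^0(0)=0$ together with $(\Lambda^0)'(0)=0$ (same argument as above applied to the integrand). For any $x\ne 0$, choosing $\lambda$ of the same sign as $x$ with $|\lambda|$ small gives $\lambda x-\Lambda^0(\lambda)=\lambda x-O(\lambda^2)>0$, so $\Lambda^{0*}(x)>0$ for all $x\ne 0$, while $\Lambda^{0*}(0)=0$. Convexity of $\Lambda^{0*}$ with unique minimum at $0$ then makes it nondecreasing on $[0,\infty)$ and nonincreasing on $(-\infty,0]$, yielding
\[
\inf_{x\in A}\Lambda^{0*}(x)\ge\inf_{|x|\ge\delta}\Lambda^{0*}(x)=\min\bigl\{\Lambda^{0*}(\delta),\Lambda^{0*}(-\delta)\bigr\}>0,
\]
which is the required bound. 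The only mildly delicate step is the continuity of $\Lambda^{*}$ at $0$ and the strict positivity of $\Lambda^{0*}$ off the origin; both follow from the standard smoothness properties of log-moment generating functions of centered, non-degenerate random variables (cf.\ Lemma 2.2.5 in \cite{dembo:zeitouni:1998}, as already exploited in the proof of Theorem~\ref{thm:uldp}).
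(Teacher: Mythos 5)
Your proof is correct and takes essentially the same route as the paper: reduce the claim to $k=0$ via Lemma~\ref{lem:property:Lambda}(ii), show that $\Lambda^{0*}$ vanishes only at the origin, and conclude by observing that the hypothesis forces $A$ to stay bounded away from $0$. The only cosmetic difference is that the paper phrases the last step as a contrapositive (if $\inf_{x\in A}\Lambda^{0*}(x)=0$ then $0\in\bar A$, whence $\inf_{x\in A}\Lambda^{*}(x)=0$), while you argue directly by choosing $\delta>0$ with $A\subset\{|x|\ge\delta\}$ and then using the monotonicity of the convex function $\Lambda^{0*}$ away from its minimizer; both are sound and equivalent in content.
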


\begin{proof} 
Using Lemma \ref{lem:property:Lambda}\emph{(ii)} it suffices to show that \eqref{eq:lem:max:ratefneq} implies $ \inf_{ x\in A}\Lambda ^{ 0*}(x)>0$. Fix any $ x\neq 0$. Since $ \Lambda ^{0 }(\lambda )$ is strictly convex and finite everywhere and  $(\Lambda^0)^{\prime}(0) = 0$,   we get that if  $ (\Lambda ^{ 0})^{ \prime }(\lambda ^{ 0}_{ x})=x$ then $ \lambda ^{ 0}_{ x}\ne 0$.  Then $ \Lambda ^{ 0*}(x)=\lambda ^{ 0}_{ x}x-\Lambda ^{ 0}(\lambda ^{ 0}_{ x})\ne 0$. If for some measurable $ A\subset \mathbb{R}$ then 
\[ 
 	\inf_{ x\in A}\Lambda ^{ 0*}(x)=0 \ \ \mbox{ implies } \ \ 0\in \bar A.   
\]
That would imply $ \inf_{ x\in A}\Lambda ^{ *}(x)=0$. This proves the lemma.
\end{proof}
\end{section}

\bibliographystyle{elsart-harv}
\bibliography{/Users/Souvik/Documents/Chronicles/Miscellanea/Latex/bibfile}

\begin{thebibliography}{14}
\expandafter\ifx\csname natexlab\endcsname\relax\def\natexlab#1{#1}\fi
\expandafter\ifx\csname url\endcsname\relax
  \def\url#1{\texttt{#1}}\fi
\expandafter\ifx\csname urlprefix\endcsname\relax\def\urlprefix{URL }\fi

\bibitem[{Apostol(1974)}]{apostol1974mathematical}
Apostol, T., 1974. Mathematical analysis. Addison-Wesley Reading,
  Massachusetts.

\bibitem[{Arnold(1980)}]{Arnold.B:8000}
Arnold, B.~C., 1980. Majorization and the Lorenz Order: A Brief Introduction.
  Vol.~43 of Lecture Notes in Statistics. Springer-Verlag.

\bibitem[{Arratia et~al.(1990)Arratia, Gordon, and
  Waterman}]{Arratia:1990p6736}
Arratia, R., Gordon, L., Waterman, M.~S., 1990. The {E}rd\"os-{R}\'enyi law in
  distribution, for coin tossing and sequence matching. The Annals of
  Statistics 18~(2), 539--570.

\bibitem[{Bryc and Dembo(1996)}]{bryc:dembo:1996}
Bryc, W., Dembo, A., 1996. Large deviations and strong mixing. Annales de
  l'Institut Henri Poincar\'e. Probabilit\'es et Statistiques 32~(4), 549--569.

\bibitem[{Dembo and Zeitouni(1998)}]{dembo:zeitouni:1998}
Dembo, A., Zeitouni, O., 1998. Large Deviations Techniques and Applications,
  2nd Edition. Applications in Mathematics. Springer-Verlag, New York.

\bibitem[{Deuschel and Stroock(1989)}]{deuschel:stroock:1989}
Deuschel, J.~D., Stroock, D., 1989. Large Deviations. Academic Press, Boston.

\bibitem[{Ellis(1984)}]{ellis:1984}
Ellis, R.~S., 1984. Large deviaions for a general class of random vectors. The
  Annals of Probability 12, 1--12.

\bibitem[{Erd{\"o}s and R\'enyi(1970)}]{ErdRen:1970kx}
Erd{\"o}s, P., R\'enyi, A., 1970. On a new law of large numbers. Journal
  d'Analyse Math{\'e}matique 23~(1), 103--111.

\bibitem[{Gartner(1977)}]{gartner:1977}
Gartner, J., 1977. On large deviations from the invariant measure. Theory of
  Probability and its Applications 22, 24--39.

\bibitem[{Ghosh and Samorodnitsky(2009)}]{ghosh:samorodnitsky:2009}
Ghosh, S., Samorodnitsky, G., 2009. The effect of memory on functional large
  deviations of infinite moving average processes. Stochastic Processes and
  their Applications 119, 534--561.

\bibitem[{Ghosh and Samorodnitsky(2010)}]{ghosh:samorodnitsky:2010}
Ghosh, S., Samorodnitsky, G., 2010. Long strange segments, ruin probabilities
  and the effect of memory on moving average processes. Stochastic Processes
  and their Applications 120~(12), 2302--2330.

\bibitem[{Li et~al.(2009)}]{li:intelwp:09}
Li, H., Sedayao, J., Hahn-Steichen, J., Jimison, E., Spence, C., Chahal, S.,2009.
Developing an Enterprise Cloud Computing Strategy. White Paper from Intel Corporation.

\bibitem[{Mansfield et~al.(2001)Mansfield, Rachev, and
  Samorodnitsky}]{mansfield:rachev:samorodnitsky:2001}
Mansfield, P., Rachev, S.~T., Samorodnitsky, G., 2001. Long strange segments of
  a stochastic process. The Annals of Applied Probability 11, 878--921.

\bibitem[{Mendler (2010)}]{mendler:yankee:2010}
Mendler, C.,2010. Cloud 99.99: The Small Print Exposed. Analyst Report from the Yankee Group.

\bibitem[{Rachev and Samorodnitsky(2001)}]{rachev:samorodnitsky:2001}
Rachev, S.~T., Samorodnitsky, G., 2001. Long strange segments in a
  long-range-dependent moving average. Stochastic Processes and their
  Applications 93, 119--148.

\bibitem[{Varadhan(1984)}]{varadhan:1984}
Varadhan, S., 1984. Large Deviations and Applications. SIAM, Philadelphia.

\end{thebibliography}
\end{document}